\documentclass[reqno]{amsart}

\usepackage{caption}
\usepackage{amsaddr}
\usepackage{comment}
\usepackage[margin=1in]{geometry} 
\usepackage{graphicx,mathtools,tikz,hyperref}
\usepackage{amsthm, amssymb, amsmath}
\usetikzlibrary{quantikz}
\usepackage{fancyhdr}

\fancypagestyle{firstpage}{
  \fancyhf{} 
  \fancyfoot[C]{%
    \thepage\\[0.5ex]
    Distribution Statement A. Approved for public release: Distribution unlimited
  }

}

\usepackage{soul}
\usepackage[utf8]{inputenc}
\usepackage[english]{babel}
\usepackage{mathdots}
\usepackage{empheq}
\usepackage{enumitem}
\usepackage{theoremref}
\usepackage{pgfplots}
\pgfplotsset{compat = newest}
\usetikzlibrary{decorations.pathmorphing, patterns, pgfplots.fillbetween, intersections}
\usetikzlibrary{decorations.pathreplacing,angles,quotes}

\usetikzlibrary{positioning}

\numberwithin{equation}{section}
\theoremstyle{plain}
\newtheorem{theorem}{Theorem}[section]

\newtheorem{lemma}[theorem]{Lemma}
\theoremstyle{remark}
\newtheorem{remark}[theorem]{Remark}
\theoremstyle{definition}

\date{\today}

\title{A Linear Combination of Unitaries Decomposition for the Laplace Operator}
\author{Thomas Hogancamp}
\email{thomas.e.hogancamp.ctr@us.navy.mil}
\address{U.S. Naval Research Laboratory, Monterey, CA, 93943, United States and American Society for Engineering Education, Washington, D.C.}
\author{Reuben Demirdjian}

\address{U.S. Naval Research Laboratory, Monterey CA, 93943, United States}
\author{Daniel Gunlycke}
\address{U.S. Naval Research Laboratory, Washington, D.C., 20375, United States}

\begin{document}

\thispagestyle{firstpage}
\begin{abstract}
   We provide novel linear combination of unitaries decompositions for a class of discrete elliptic differential operators. Specifically, Poisson problems augmented with periodic, Dirichlet, Neumann, Robin, and mixed boundary conditions are considered on the unit interval and on higher-dimensional rectangular domains. The number of unitary terms required for our decomposition is independent of the number of grid points used in the discretization and scales linearly with the spatial dimension. Explicit circuit constructions for each unitary are given and their complexities analyzed. The worst case depth and elementary gate cost of any such circuit is shown to scale at most logarithmically with respect to number of grid points in the underlying discrete system. We also investigate the cost of using our method within the Variational Quantum Linear Solver algorithm and show favorable scaling. Finally, we extend the proposed decomposition technique to treat problems that include first-order derivative terms with variable coefficients. 
\end{abstract}
\maketitle
 
\section{Introduction}

The importance of Partial Differential Equations (PDEs) in the physical sciences cannot be overstated. They lay the mathematical foundations of diffusion, electromagnetism, fluid dynamics, elasticity, quantum mechanics, and relativity. Elliptic PDEs are a specialized class of such equations that are used to model steady states encountered in these subject areas. Moreover, the Laplace operator, which is the prototypical elliptic differential operator, also appears in many of their associated time-dependent formulations. Poisson's equation is a fundamental elliptic equation as it involves only the Laplace operator, a forcing term, and boundary conditions. The relative simplicity of Poisson problems combined with their central role in modeling real-world physical systems makes them an excellent test bed for novel PDE methods. The results obtained in this setting can often be adapted to treat more complex problems, including those with general elliptic operators or time dependence such as heat or wave equations. 

Explicit solutions for PDEs are exceedingly rare, and real-world applications typically rely on numerical methods to generate faithful approximations. The Finite Difference Method (FDM) is a foundational approach to solving discretized PDEs including Poisson's equation. The FDM involves discretization of the spatial domain, approximating derivatives with finite differences, encoding forcing terms and any boundary data, and finally setting up a system of equations for the corresponding discrete values of the unknown solution. This technique is rather straightforward in set-up and execution for Poisson problems posed on rectangular domains. However, the advantages of simplicity can be offset by poor performance when the FDM is implemented on classical hardware. For example, the most expensive subroutine for many Navier-Stokes solvers involves solving a FDM Poisson equation for a correction pressure term, and research aimed at improving this process is ongoing \cite{costa2018fft, li2022new}. 

Exciting developments in quantum computing have opened the possibility for quantum algorithms capable of solving large systems of equations with an exponential improvement over classical algorithms. The Harrow-Hassidim-Lloyd algorithm \cite{PhysRevLett.103.150502} and its variants \cite{ambainis2012variable, chakraborty2019power, childs2017quantum} are among the most popular quantum linear system solvers. However, these algorithms can be restrictive in two nontrivial ways; they rely on nondescript ``oracles" to perform several key tasks that complicate rigorous analysis, and they are ill-suited for the noisy intermediate-scale quantum (NISQ) devices currently available. Variational Quantum Algorithms (VQAs) offer an alternative approach to solving systems of equations and are considered good candidates for the NISQ era \cite{huang2021near}. The Variational Quantum Linear Solver (VQLS) \cite{bravo2023variational} is a hybrid algorithm that offloads expensive cost function evaluations to a quantum computer and uses the results to perform optimization on a classical machine. The parameters to be optimized describe trial solutions and the algorithm terminates when some pre-defined threshold of accuracy is achieved.  

The resource requirements for VQLS are critically dependent on the Linear Combination of Unitaries (LCU) decomposition used to encode the underlying linear system. Indeed, the number of circuit executions required per iteration of the algorithm depends quadratically on the number of terms in the LCU \cite{bravo2023variational}. Hence, a balance between the circuit depths and the total number of circuits in an LCU decomposition is sought for VQLS applications; naive decompositions using only very shallow depth circuits, such as tensor products of Pauli gates, typically require too many terms to be efficient. The number of unitary terms and their associated circuit depths can be at most $O(\textup{polylog}(N))$, where $N$ is the system size, to achieve efficiency in the VQLS setting. Since there is no general decomposition strategy that satisfies these criteria, researchers are faced with the serious challenge of finding viable candidates on a problem-specific case. 

The utility and importance of efficient LCU decompositions are not limited to their role within the VQLS framework. For instance, many quantum algorithms use LCU decompositions to block-encode matrices \cite{low2019hamiltonian,childs2012hamiltonian,chakraborty2019power, gilyen2019quantum, kharazi2024explicit, gnanasekaran2024efficient,gnanasekaran2025efficient}. A number of fault-tolerant linear system solvers rely on this process as a critical subroutine \cite{childs2017quantum, childs2021high}. Furthermore, recent developments in the linear combination of Hamiltonian simulation technique (LCHS) have produced promising near-optimal quantum differential equation solvers \cite{an2023linear, an2026quantum}. It seems likely that the demand for efficient LCU decompositions will only grow as the fault-tolerant era is approached. 

Quantum linear system solvers can potentially be combined with numerical methods, such as the FDM, to solve PDEs more efficiently, and with finer precision, than their classical counterparts. This exciting possibility has motivated a significant amount of recent research \cite{an2025quantum, an2023linear, an2026quantum, bae2024hardware,  berry2014high, liu2021variational,kharazi2024explicit, sato2021variational, childs2021high, demirdjian2022variational, demirdjian2025efficient, gnanasekaran2024efficient, gnanasekaran2025efficient}. However, the class of equations treated in the literature remains restrictive, and end-to-end applications are rare. In particular, the treatment of non-constant boundary and forcing functions, first-order derivatives, variable coefficients, complex geometries, and solution readout are largely unaddressed; the potential for real-world applications remains narrow until such features can be incorporated into quantum PDE solvers. Furthermore, quantum advantage is challenging to demonstrate even in the highly simplified models currently studied, and the development of efficient algorithms amenable to near-term devices is needed for testing and verification. 

In this work, we provide novel LCU decompositions for several classes of discrete elliptic operators obtained via the FDM. Poisson problems posed on rectangular domains in arbitrary dimension with periodic, Dirichlet, Neumann, Robin, and mixed boundary conditions are treated. The efficiency of our constructions are supported by an in-depth complexity analysis showing that they require at most $10d$ LCU terms with circuit depths and gate costs at most $O(\log{N})$, where the corresponding PDE is posed in $d$ dimensions and discretized using $N$ grid points. We believe these results should be useful for a variety of applications, since generating LCU decompositions for the Laplacian is a key subroutine of many quantum PDE solvers \cite{childs2021high,an2023linear, an2025quantum,an2026quantum, bae2024hardware,kharazi2024explicit, sato2021variational, demirdjian2022variational,demirdjian2025efficient}. Moreover, we consider the use of our constructions for a $1$-dimensional Poisson problem within the VQLS algorithm and compare the results with previously proposed LCU decompositions. Our method is shown to require at least $\log^{2}{N}$ fewer resources per iteration. Importantly, the decompositions presented in this article are not inherently tied to or limited by the details of the VQLS algorithm. For example, they could be adapted to LCU based fault-tolerant solvers in the spirit of \cite{childs2017quantum}. Given the prevalence of LCU based matrix decompositions in both NISQ era variational algorithms and fault-tolerant linear solvers, we believe our methods will be more widely useful in both near term and future quantum PDE solvers. 

Finally, we present an extension of our techniques to treat equations with variable-coefficient first-order derivative terms. The $1$-dimensional Dirichlet Poisson problem, augmented with a first-order derivative terms whose coefficient is a $k\textsuperscript{th}$ degree polynomial, is selected as a pilot problem. The resulting decomposition has $O(\log^{k}(N))$ unitary terms that can be implemented with circuit depth and gate cost at most $O(\log(N))$. To the best of our knowledge, this is the first explicit LCU decomposition strategy for Poisson problems with general polynomial coefficients. 

The remainder of this work is structured as follows. The basic notation used throughout this paper and an overview of VQLS are discussed in Section \ref{prelim_sect}. The mathematical formulation of the PDEs and their discretized counterparts to be considered are given in Section \ref{prob_sect}. Our main results are recorded as theorems in Section \ref{main_results_sect}. In Section \ref{Decomp_sect}, derivations of our novel LCU decompositions are presented. A complexity analysis for these decompositions is given in Section \ref{complexity_sect}. In addition, we investigate the resource requirements to implement our methods within the VQLS algorithm for a $1$-dimensional Poisson problem with Dirichlet boundary conditions. Section \ref{proof_sect} ties together the work of the preceding sections to give proofs of the main results. We provide a method to extend our LCU decompositions to treat problems with polynomial-coefficient first-order derivative terms in Section \ref{ext_sect}. Finally, we discuss our conclusions and future research directions in Section \ref{disc_sect}.  

\section{Preliminaries} \label{prelim_sect} 
In this section, we discuss the notation to be used throughout the rest of the article. We also give a brief summary of the VQLS algorithm for the convenience of the reader. 
\subsection{Notation}
We denote by $U(N)$ the set of $N\times N$ unitary matrices over $\mathbb{C}$. Let us recall the matrix representations of some commonly used single-qubit gates in quantum computing: 
\begin{align*}
    I & \coloneqq \begin{pmatrix}
        1 & 0 \\
        0 & 1
    \end{pmatrix} \qquad\qquad \; \; 
    X  \coloneqq \begin{pmatrix}
        0 & 1 \\
        1 & 0
    \end{pmatrix} \\
    Z & \coloneqq \begin{pmatrix}
        1 & 0 \\
        0 & -1
    \end{pmatrix} \qquad \qquad
    H  \coloneqq \frac{1}{\sqrt{2}} \begin{pmatrix}
        1 & 1 \\
        1 & -1
    \end{pmatrix} \, . 
\end{align*}
Here, $X$ is the Pauli-$X$ gate (or $NOT$ gate), $Z$ is the Pauli-$Z$ gate, and $H$ is the Hadamard gate. We will also make use of the following matrices:
\begin{align} \label{tau_def}
\begin{split}
    \tau_{0} & \coloneqq\begin{pmatrix}
        1 & 0 \\
        0 & 0
    \end{pmatrix} 
    \qquad \tau_{1} \coloneqq\begin{pmatrix}
        1 & 0 \\
        0 & 0
    \end{pmatrix} \\
    \tau_{2} & \coloneqq \begin{pmatrix}
        0 & 0 \\
        1 & 0
    \end{pmatrix} 
    \qquad \tau_{3}  \coloneqq \begin{pmatrix}
        0 & 0 \\
        0 & 1
    \end{pmatrix} \, ,
\end{split}
\end{align}
and we refer to the set $\mathbb{T} = \{ \tau_{0}, \tau_{1}, \tau_{2}, \tau_{3} \}$ as the tau-basis. Note that the elements of $\mathbb{T}$ are not unitary matrices and therefore do not represent quantum gates. Nevertheless, they are useful when decomposing matrices and play an important role in this work. 

Next, we need to lay out notation for several controlled operations. Consider a quantum register with $k$ qubits indexed starting at $0$. Let $t \in \{1, \cdots , k-1\}$ and $c \in \{0,\dots, t-1\}$. Then, 
\begin{equation} \label{CNOT_def}
    CNOT(c,t) \coloneqq I^{\otimes c}\otimes \left(\tau_{0}\otimes I^{\otimes t-c}+\tau_{3}\otimes I^{\otimes t-c-1}\otimes X\right)\otimes I^{\otimes k-t-1} 
\end{equation}
defines a controlled $NOT$ operation with control $c$ and target $t$. The following basic multicontrol $Z$ and $NOT$ operations will also be used: 
\begin{equation} \label{control_defs_tz}
    C_{k}(Z) \coloneqq \begin{pmatrix}
            1&  &  &  &  \\
            & 1 &  &  &  \\
            & & \ddots & &  \\ 
            & & & 1& \\
            & & & & -1\\
            \end{pmatrix}_{2^{k}\times 2^{k}} 
        \qquad 
    T_{k} \coloneqq \begin{pmatrix}
            1&  &  &  &  \\
            & 1 &  &  &  \\
            & & \ddots & &  \\ 
            & & & 0&1 \\
            & & & 1& 0\\
            \end{pmatrix}_{2^{k}\times 2^{k}}       
        \, .
\end{equation}
Note that $C_{k}(Z)$ and $T_{k}(Z)$ each control the first $k-1$ qubits. We also need to establish notation for more complex multicontrol $NOT$ operations. We omit the relevant matrix representations as they are not required in the following analysis. Let $C \subset \{0,\cdots, k-1\}$, with $|C| \leq k-1$, and suppose that $t \in \{0, \cdots , k-1\} \setminus C$. Then, we use $CNOT(C, t)$ to denote the multicontrol $NOT$ operation with controls $C$ and target $t$.

Lastly, we define the increment gate whose matrix representation is 
\begin{equation} \label{inc_gate_def}
        S_{k} \coloneqq \begin{pmatrix}
            0&  & \hdots &  0& 1 \\
            1& 0 &  &  & 0 \\
            0& 1& \ddots & &\vdots  \\ 
            \vdots & \ddots& \ddots& 0& \\
            0&\hdots & 0&1 & 0\\
            \end{pmatrix}_{2^{k}\times 2^{k}} \, .
\end{equation}
The gate $S_{k}^{\dagger}$ is referred to as a decrement gate. 

\subsection{VQLS Overview}
\label{vqls_subsec}
The LCU decompositions to be presented in this work are general enough to be applicable in many scenarios and could be used in concert with a variety of linear solvers. We choose to investigate their use within the VQLS algorithm to demonstrate one such application and provide a framework for comparison with previously proposed decompositions. A rapid overview of the VQLS algorithm is given in this subsection for convenience. 

Let $A \in \mathbb{C}^{N\times N}$ and $\mathbf{b} \in \mathbb{C}^{N}$, where $N = 2^{n}$, and suppose that we want to solve 
\begin{equation*}
    A \mathbf{x} = \mathbf{b} \, . 
\end{equation*}
Suppose that we have access to a parametrized circuit $V(\theta)$ that produces trial states by $V(\theta)|0 \rangle = x(\theta)$ and also to a circuit $U_{b}$ such that $U_{b}|0\rangle = |b\rangle$, where $|b\rangle$ is proportional to $\mathbf{b}$. The goal of the VQLS algorithm is to optimize the parameter $\theta$ using an appropriate cost function to find some $\theta_{0}$ for which $A|x(\theta_{0})\rangle \propto |b\rangle$. Given an LCU decomposition of $A$ in the form 
\begin{equation*}
    A = \sum_{l=1}^{L}c_{l}R_{l} \, , 
\end{equation*}
where $c_{l} \in \mathbb{C}$ and $R_{l} \in U(2^{n})$, the \textit{local cost function} proposed in \cite{bravo2023variational} can be written as 
\begin{equation}\label{local_cost}
    C(\theta) = \frac{1}{2} - \frac{1}{2N}\frac{\sum_{j=1}^{n}\sum_{ll'}c_{l}c_{l'}^{*}\delta^{(j)}_{ll'}}{\sum_{ll'}c_{l}c_{l'}^{*}\beta_{ll'}} \, , 
\end{equation}
where
\begin{equation} \label{delta_def}
    \delta_{ll'}^{(j)} = \langle \mathbf{0} | V^{\dagger}(\theta) R_{l'}^{\dagger}U_{b} Z_{j} U_{b}^{\dagger} R_{l} V(\theta)|\mathbf{0} \rangle \, , 
\end{equation}
and 
\begin{equation} \label{beta_def}
    \beta_{ll'} = \langle \mathbf{0} | V^{\dagger}(\theta) R^{\dagger}_{l'} R_{l} V(\theta)| \mathbf{0} \rangle \, , 
\end{equation}
and $Z_{j} \coloneq I^{\otimes j-1}\otimes Z \otimes I^{\otimes n-j}$. Note that $\beta_{ll} =1$. Also, $\beta_{ll'} = \beta_{ll'}^{*} = \beta_{l'l}$ whenever all circuits involved are real-valued (this holds for $\delta^{(j)}_{ll'}$ as well). These observations allow us to skip repetitive computations and reduce the cost of evaluating $C(\theta)$. 

The Hadamard Test can be used for each iteration of the algorithm to estimate the required values of $\beta_{ll'}$ and $\delta^{(j)}_{ll'}$. Then, a classical computer is used to calculate the cost function and update the parameter $\theta$. If $c_{l}$, $R_{l}$, $\mathbf{b}$, and $V(\theta)$ are all real-valued, then the Hadamard Test circuits proposed in \cite{bravo2023variational}, and depicted below in Figures \ref{fig_1} and \ref{fig_2}, can be used. Note that these circuits must be run multiple times to obtain accurate estimates; see \cite{bravo2023variational} for details. 

\begin{figure}[h!] 
    \centering
    \includegraphics[width=0.5\linewidth]{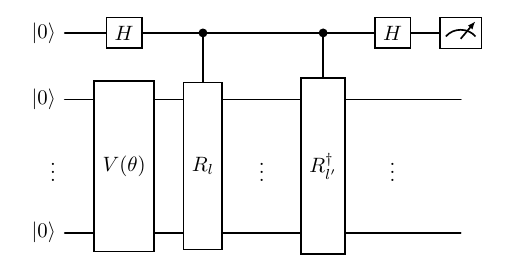}
    \caption{Hadamard Test Circuit for $\beta_{ll'}$ (real-valued case).}
    \label{fig_1}
\end{figure}
\begin{figure}[h!] 
    \centering
    \includegraphics[width=0.5\linewidth]{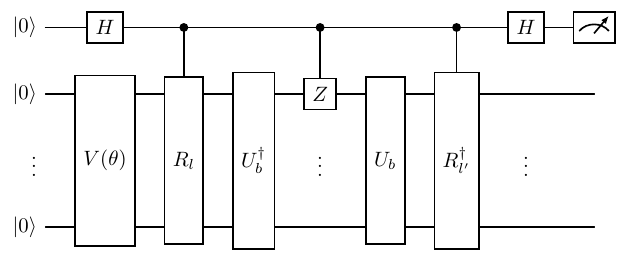}
    \caption{Hadamard Test Circuit for $\delta^{(1)}_{ll'}$ (real-valued case). The target of the controlled $Z$ gate must be modified appropriately for $j\neq 1$.}
    \label{fig_2}
\end{figure}
\section{The Problem} \label{prob_sect}

Let $\Omega \subset \mathbb{R}^{d}$ be an open bounded set. We are interested in the following Poisson problem: 
\begin{empheq}[left=\empheqlbrace]{align}
\label{basic_poiss}
\begin{split}
        \Delta u&=f \qquad \qquad \qquad \;\;\text{in} \; \Omega\\  
         \gamma \partial_{\nu}u +\beta u &=g \qquad \qquad \qquad \;\; \text{on} \; \partial \Omega \, , 
\end{split}
\end{empheq}
where $\gamma, \beta$ are constants, $\partial_{\nu}$ denotes the outward unit normal on $\Omega$, $f$ and $g$ are given functions, and $u$ is the desired solution. In addition to the general boundary condition given above, we also consider the periodic case. The literature regarding the existence and regularity of solutions for equation \eqref{basic_poiss} is vast. We mention only that these results are intimately connected to the regularity of $f$, $g$, and $\partial \Omega$. See \cite{gilbarg1977elliptic} and \cite{evans2022partial} for textbook treatments.   

The state-of-the-art applications for quantum computing are unable to capture the full generality of \eqref{basic_poiss}. Recent work in this area is restricted to solving either $1$-dimensional problems or those higher-dimensional ones for which $\Omega = (0,1)^{d}$. We adopt the same geometric constraints and leave the treatment of curved boundaries to future research.  

In the following subsections, we consider discretizations of the $1$-dimensional Poisson problem on $\Omega = (0,1)$ with a variety of boundary conditions. All cases are presented as $2^{n}\times 2^{n}$ linear systems. Lastly, we introduce some basic extensions to the higher-dimensional setting. 

\subsection{$1$-dimensional Dirichlet}
\label{1d_Dir_sub_sect}
First, let us consider the case of Dirichlet boundary conditions given by $u(0)=a$ and $u(1)=b$. Let $h_{1} = \frac{1}{2^{n}+1}$ and $x_{i} = ih_{1}$. In this case, a second-order central difference scheme can be used to approximate $\Delta$ inside $(0,1)$ and yields the following discretized Laplace operator:  
\begin{equation} \label{basic_LD_mat}
    L_{n,1} = \frac{1}{h_{1}^{2}}\begin{pmatrix}
            -2 & 1  & 0 & \cdots &  & 0 \\
            1  & -2 & 1 & 0 &  &  \vdots\\ 
            0  & 1 & -2 & 1 &  &  \\
        \vdots &   \ddots & \ddots  & \ddots   &  \ddots &  0 \\
              &       &  0  &  1 & -2 &  1 \\
            0  &       &  \cdots  &   0& 1 & -2 \\
            \end{pmatrix}_{2^{n}\times 2^{n}}\, , 
\end{equation}
and our problem takes the form 
\begin{equation}\label{disc_Dir_sys}
    L_{n,1} \bold{u} = \bold{f} +B_{1}, 
\end{equation}
where 
\begin{align*}
   \bold{u} &= (u(x_{1}), \dots, u(x_{2^{n}}))^{T} \\
   \bold{f} &= (f(x_{1}), \dots, f(x_{2^{n}}))^{T} \\ 
   B_{1} & = -\left(\frac{a}{h_{1}^{2}},0,\dots,0,\frac{b}{h^{2}_{1}}\right)^{T} \, . 
\end{align*}
Note that the system \eqref{disc_Dir_sys} sets up $2^{n}$ equations corresponding to the interior points of $(0,1)$ and the boundary values are accounted for through $B_{1}$. It can be shown that this scheme is second-order accurate. For convenience, we introduce a normalized matrix $A_{n,1}$ given by 
\begin{equation} \label{A_def}
    A_{n,1} = h_{1}^{2} L_{n,1} \, . 
\end{equation}
It should be emphasized that $A_{n,1}$ is invertible. 

\subsection{$1$-dimensional Periodic} 

Periodic boundary conditions on the interval $(0,1)$ amount to the constraint that $u(0)=u(1)$. Let $h_{2} = \frac{1}{2^{n}}$ and $x_{i} = ih_{2}$. The periodic Laplace operator can be implemented with second order accuracy using 
\begin{equation} \label{basic_LP_mat}
    L_{n,2} = \frac{1}{h_{2}^{2}}\begin{pmatrix}
            -2 & 1  & 0 & \cdots & 0 & 1 \\
            1  & -2 & 1 & 0 & \cdots &  0\\ 
            0  & 1 & -2 & 1 &  & \vdots \\
        \vdots &   \ddots & \ddots  & \ddots   &  \ddots &  0 \\
            0  &       &   0 &  1 & -2 &  1 \\
            1  &   0    &  \cdots  &   0& 1 & -2 \\
            \end{pmatrix}_{2^{n}\times 2^{n}}\, . 
\end{equation}
The resulting discrete Poisson equation is 
\begin{equation*}
    L_{n,2}\bold{u} = \bold{f} \, ,
\end{equation*}
where $\bold{u} = (u(x_{0}), \dots, u(x_{2^{n}-1}))^{T}$ and $\bold{f} = (f(x_{0}), \dots, f(x_{2^{n}-1}))^{T}$. Note that this scheme includes the left boundary point while excluding the right one. We introduce the normalization 
\begin{equation*}
    A_{n, 2} = h_{2}^{2}L_{n,2}
\end{equation*}
and note that $A_{n,2}$ is a singular matrix. 

\subsection{$1$-dimensional Neumann/Robin}
\label{1d_NR_sub}
We treat the Neumann and Robin cases simultaneously for convenience. Robin boundary conditions on $(0,1)$ may be expressed as 
\begin{align} \label{Robin_cond}
    \begin{split}
        u_{x}(0)+a_{0}u(0) &= b_{0} \\ 
        u_{x}(1)+a_{1}u(1) & = b_{1} \, , 
    \end{split}
\end{align}
where $a_{0}$ and $a_{1}$ are non-zero constants. We can recover pure Neumann boundary conditions on one (or both) endpoints from \eqref{Robin_cond} by setting one (or both) of $a_{0}, a_{1}$ equal to $0$. Note that \eqref{Robin_cond} describes mixed Neumann--Robin boundary conditions if precisely one of $a_{0},a_{1}$ is equal to $0$.  

There are several ways to discretize the $1$-dimensional Poisson problem with Neumann/Robin boundary conditions as given by \eqref{Robin_cond}. We would like to retain second-order accuracy, as done with the periodic and Dirichlet cases, and accordingly adopt 
\begin{equation} \label{basic_LN_mat}
    L_{n,3} = \frac{1}{h_{3}^{2}}\begin{pmatrix}
            (-2+2a_{0}h_{3}) & 2  & 0 & \cdots &  & 0 \\
            1  & -2 & 1 & \ddots&  &  \vdots\\ 
            0  & 1 & -2 & 1 &  &  \\
        \vdots &    & \ddots  & \ddots   &  \ddots &  0 \\
              &       &   \ddots&  1 & -2 &  1 \\
            0  &       &  \cdots  &   0& 2 & (-2-2a_{1}h_{3}) \\
            \end{pmatrix}_{2^{n}\times 2^{n}}\, ,
\end{equation}
where $h_{3} = \frac{1}{2^{n}-1}$ and $x_{i} = i h_{3}$. The problem is reduced to 
\begin{equation*}
    L_{n,3}\bold{u} = \bold{f} + B_{3} \, , 
\end{equation*}
where 
\begin{align*}
   \bold{u} &= (u(x_{0}), \dots, u(x_{2^{n}-1}))^{T} \\
   \bold{f} &= (f(x_{0}), \dots, f(x_{2^{n}-1}))^{T} \\ 
   B_{3} & = \left(\frac{2b_{0}}{h_{3}},0,\dots,0,-\frac{2b_{1}}{h_{3}}\right)^{T} \, . 
\end{align*}
Note that the system \eqref{basic_LN_mat} includes equations for all interior points and both endpoints. Moreover, the equations corresponding the first and last rows of \eqref{basic_LN_mat} are derived using so-called ghost points.  

We will work with matrix
\begin{equation*}
    A_{n,3} = h_{3}^{2}L_{n,3}
\end{equation*}
when deriving an LCU for the Neumann/Robin case. The invertibility of $A_{n,3}$ is sensitive to the values of $a_{0}, a_{1}$, and $h_{3}$. The choice $a_{0}=a_{1}=0$, for example, results in a singular matrix. On the other hand, setting $a_{0} = -\dfrac{1}{h_{3}}$ and $a_{1}= - \dfrac{1}{h_{3}}$ yields a non-singular matrix. 

\subsection{Higher-dimensional problems}
\label{Higher_D_Sub}
Let $d\geq 2$ be an integer. For a general domain $\Omega$, finite difference schemes for \eqref{basic_poiss} are far more complicated, if not intractable, when compared to the $d=1$ setting. However, there exist relatively straightforward extensions when $\Omega$ is taken to be a box-type domain. To this end, let $\Omega = \Pi_{i=1}^{d}(0,1)$, $\bold{n} = (n_{1}, \dots, n_{d})$, and $\bold{\alpha} = (\alpha_{1}, \dots, \alpha_{d})$, where $n_{i}>2$ is an integer and $\alpha_{i} \in \{1,2,3\}$. Our aim is to apply the operator $L_{n_{i},\alpha_{i}}$ to the $i$-th dimension of $\Omega$. For each $i \in \{1, \dots, d\}$, let $(x_{1}^{i}, \dots, x^{i}_{2^{n_{i}}})$ denote the discretization of $(0,1)$ corresponding to $L_{n_{i},\alpha_{i}}$ as described in the preceding subsections (this requires a trivial relabeling of the $x^{i}_{j}$ values whenever $\alpha_{i}\neq 1$). Now, we let
\begin{equation*}
    \bold{u} \coloneqq (u_{1,1,\cdots,1}, u_{2,1,\cdots,1},\dots, u_{N_{1},1,\cdots,1}, u_{1,2,1,\cdots,1},\dots, u_{1,1,\cdots, 1,N_{d}}, u_{2,1,\cdots,1,N_{d}}, \dots, u_{N_{1},\cdots ,N_{d}})^{T},
\end{equation*}
where $N_{i}=2^{n_{i}}$ and 
\begin{equation*}
    u_{i_{1},\cdots ,i_{d}} \coloneq u(x^{1}_{i_{1}}, \dots, x^{d}_{i_{d}})\, ,  
\end{equation*}
which can be seen as the vectorization of the multidimensional array $U$ whose entries are given by $U_{i_{1},\cdots, i_{d}}=u_{i_{1},\cdots, i_{d}}$. Finally, we let 
\begin{equation*}\label{higher_dim}
    L^{d}_{\bold{n},\bold{\alpha}} \coloneqq \sum_{i=1}^{d}I^{\otimes n_{d}}\otimes I^{\otimes n_{d-1}}\otimes \cdots\otimes  I^{\otimes n_{i+1}} \otimes L_{n_{i},\alpha_{i}} \otimes I^{\otimes n_{i-1}}\otimes\cdots \otimes I^{\otimes n_{1}} \, . 
\end{equation*}
The operator $L^{d}_{\bold{n},\bold{\alpha}} \in \mathbb{R}^{N\times N}$ acts on $\bold{u} \in \mathbb{R}^{N\times 1}$, where $N = N_{1}N_{2}\cdots N_{d}$, to approximate $\Delta u$ inside $\Omega$ while applying the boundary conditions associated to $L_{n_{i},\alpha_{i}}$ on $\partial \Omega \cap \{\bold{x} \in \mathbb{R}^{d} \; \; \lvert \; \; x^{i} =0,1\}$. Observe that mixed boundary conditions are obtained whenever the values of $\alpha_{i}$ are not all equal. 

Note that $L^{d}_{\bold{n},\bold{\alpha}}$ inherits the second-order accuracy of the constituent $1$-dimensional operators. We consider discretized Poisson problems of the form 
\begin{equation*}
    L^{d}_{\bold{n},\bold{\alpha}}\bold{u} = \bold{f}+B^{d}_{\bold{n},\alpha} \, , 
\end{equation*}
where $\bold{f}\in  \mathbb{R}^{N\times 1}$ is the vectorization of $f$ and $B^{d}_{n, \alpha}\in \mathbb{R}^{N\times 1}$ encodes the appropriate Dirichlet, Neumann, or Robin values analogously to the $d=1$ case. We do not provide an explicit representation of $B^{d}_{\bold{n},\alpha}$ in the general case since our primary focus involves LCU decompositions of differential operators. However, an explicit two-dimensional is provided in Appendix \ref{Append_B} to illustrate the basic idea. The invertiblity of $L^{d}_{\bold{n},\bold{\alpha}}$ is sensitive to each constituent operator and we make no claims about the general situation. 

\section{Main Results} \label{main_results_sect} 

The primary contributions of this work are novel LCU decompositions for a wide class of Poisson problems. We record several key results in this section detailing their structure and complexity. Our first theorem concerns $1$-dimensional Dirichlet problems as outlined in Section \ref{1d_Dir_sub_sect} (see \eqref{basic_LD_mat} and \eqref{A_def}). 

\begin{theorem}{(Dirichlet Case)} \thlabel{thm1}
    The matrix $A_{n,1}$ admits a decomposition of the form
    \begin{equation} \label{thm1_eq}
        A_{n,1} = \sum_{l=1}^{5}c_{1,l}R_{1,l}\, , 
    \end{equation}
    where $c_{1,l} \in \{1, -2,\pm \frac{1}{2} \}$ and each $R_{1,l}$ is unitary. Moreover, if a single ancilla qubit is available, then a given $R_{1,l}$ may be implemented using at most $2$-Hadamard gates and $O(n)$ Toffoli, $X$, and $CNOT$ gates.
\end{theorem}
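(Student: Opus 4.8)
The plan is to explicitly write down the five unitaries and their coefficients, then verify the decomposition entrywise and analyze the circuit implementation of a generic $R_{1,l}$. The normalized matrix $A_{n,1}$ is the tridiagonal Toeplitz matrix with $-2$ on the diagonal and $1$ on the off-diagonals, truncated so that the corners are $-2$ (no wraparound). The natural starting point is to recognize that the periodic version of this matrix, $A_{n,2}$, equals $-2I + S_{n} + S_{n}^{\dagger}$, where $S_{n}$ is the increment gate from \eqref{inc_gate_def}; both $S_{n}$ and $S_{n}^{\dagger}$ are unitary. So $-2I$, $S_{n}$, and $S_{n}^{\dagger}$ account for three of the five terms, giving us the correct interior stencil plus spurious entries in the $(1,2^{n})$ and $(2^{n},1)$ corners coming from the cyclic shifts. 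The remaining two unitaries must be chosen to cancel exactly those two corner entries. A corner entry in position $(1,2^{n})$ is $|0\cdots0\rangle\langle 1\cdots1|$, which in the tau-basis is $\tau_{2}^{\otimes n}$; similarly the $(2^{n},1)$ entry is $\tau_{1}\otimes\tau_{3}^{\otimes n-1}$-type data — more precisely one writes these rank-one corner matrices in terms of the $\tau_i$ and then embeds each into a unitary acting on one extra ancilla qubit. The standard trick: a rank-one matrix $|v\rangle\langle w|$ with $\langle v| w\rangle$ small is not unitary, but a matrix like $\tau_2^{\otimes n}\otimes$(something) plus its "complement" block can be completed to a unitary on $n+1$ qubits, e.g. using an anti-diagonal reflection $T_{n+1}$ together with projector-controlled operations. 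Concretely I expect $R_{1,4}$ and $R_{1,5}$ to be reflections of the form "$X$ on the ancilla, controlled on all $n$ system qubits being $0$ (resp.\ all $1$)" composed with a global swap $T_{n}$, with coefficients $\pm\frac12$ so that the ancilla-$|0\rangle$ component reproduces $\mp\frac12(\text{anti-diagonal reflection}) \pm \frac12(\ldots)$ and the net effect on the top-left $2^n\times2^n$ block is exactly the two missing $+1$'s with the right sign to cancel. The arithmetic $-2 + \frac12 - \frac12$-type bookkeeping pins down why the coefficient set is $\{1,-2,\pm\frac12\}$.

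After exhibiting the five matrices, the verification step is routine: expand $\sum_l c_{1,l} R_{1,l}$, project onto the ancilla-$|0\rangle$ subspace (if an ancilla is used for some $R_{1,l}$, the other terms act as identity on it), and compare entrywise with \eqref{basic_LD_mat} after multiplying by $h_1^2$. This is a finite check of the interior three-point stencil (handled by $-2I + S_n + S_n^\dagger$) plus the two corner cancellations.

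For the circuit complexity claim, I would treat each unitary in turn. The identity is free. The increment gate $S_n$ and decrement $S_n^\dagger$ are implemented by the well-known cascade of multicontrolled $NOT$ gates: $S_n = \prod_{j} CNOT(\{0,\dots,j-1\}, j)$ read in the appropriate order, which is $n$ multicontrol-$X$ gates; with one ancilla available each multicontrol-$X$ on $\le n-1$ controls decomposes into $O(n)$ Toffoli and $CNOT$ gates (the standard ancilla-assisted decomposition), so $S_n$ costs $O(n^2)$ naively — but since only $O(n)$ of these gates have many controls and the total is still polynomial in $n = \log N$, this is consistent with the stated $O(n)$ count \emph{if} one is more careful, e.g.\ using a single clean ancilla to build a "staircase" that shares work across the cascade, or noting the paper's convention counts the cascade itself. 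The two corner unitaries $R_{1,4}, R_{1,5}$ are built from one multicontrolled $X$ (all-zero or all-one controls on $n$ qubits, target the ancilla), conjugated/composed with $T_n$ or a layer of $X$'s and a single reflection; each again is $O(n)$ Toffoli/$X$/$CNOT$ gates with one ancilla, and at most two Hadamards appear (the Hadamards entering the diffusion-style reflection $T_{n+1}$, or none at all depending on the construction — the bound is stated as "at most $2$").

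The main obstacle I anticipate is the careful design of $R_{1,4}$ and $R_{1,5}$: one must produce genuinely unitary matrices whose restriction to the computational subspace cancels the two unwanted off-diagonal entries and contributes \emph{nothing else} (no corruption of the interior stencil), which forces the use of the ancilla qubit and a reflection-type structure, and simultaneously one must keep the Hadamard count at $\le 2$ and all multicontrol gates decomposable into $O(n)$ elementary gates. Getting the signs and the $\pm\frac12$ coefficients to line up with the reflection identity $\frac12(I + R) = $ projector is the delicate bookkeeping; everything else — the interior stencil via shifts, the routine entrywise check, and the standard multicontrol-$X$ gate count — is mechanical.
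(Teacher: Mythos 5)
Your identification of the first three terms ($-2I^{\otimes n}+S_{n}+S_{n}^{\dagger}$ reproducing the periodic operator) and of the remaining task (cancelling the two spurious corner entries) matches the paper. But your proposed mechanism for the last two unitaries is where the argument breaks down, and the key idea is missing. You try to realize the corner matrix as rank-one projector-type terms and then ``complete'' each to a unitary on $n+1$ qubits using the ancilla, recovering $A_{n,1}$ by projecting onto the ancilla-$|0\rangle$ subspace. That does not yield a decomposition of the form $A_{n,1}=\sum_{l}c_{1,l}R_{1,l}$ with each $R_{1,l}\in U(2^{n})$: the restriction of an $(n+1)$-qubit unitary to an ancilla subspace is generally not unitary on the system register, so what you would obtain is a block-encoding, not the LCU claimed in the theorem. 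Moreover the construction is never pinned down (``I expect $R_{1,4}$ and $R_{1,5}$ to be reflections of the form\dots''), so the delicate sign bookkeeping you correctly flag as the crux is left undone. The paper's actual trick requires no ancilla for unitarity: the two-corner matrix $C_{n}$ satisfies $C_{n}=\tfrac12 X^{\otimes n}+\tfrac12 C_{n}^{-}$, where $X^{\otimes n}$ is the full anti-diagonal and $C_{n}^{-}$ is the anti-diagonal reflection with $+1$ at the two corners and $-1$ elsewhere; both are genuine $2^{n}\times 2^{n}$ unitaries, the $-1$'s cancel in the linear combination, and $-C_{n}^{-}$ is implemented explicitly as a product of $CNOT$ ladders, $X$ gates, and one $C_{n-1}(Z)$ (whose conversion to a multicontrolled $NOT$ supplies the two Hadamards). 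The ancilla enters only as a work qubit for efficient circuit synthesis, not to manufacture unitarity.

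The complexity part of your argument is also incomplete. You compute that the naive multicontrol cascade for $S_{n}$ costs $O(n^{2})$ elementary gates and then assert this ``is consistent with the stated $O(n)$ count if one is more careful''; the theorem's $O(n)$ bound actually relies on the ancilla-assisted increment construction of Gidney (cited in the paper), together with the Barenco et al.\ decomposition of a multicontrolled $NOT$ with a borrowable qubit into $O(n)$ Toffolis. Without invoking such a construction, the gate-count claim is not established.
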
 
\begin{remark}
    Two of the $R_{1,l}$ terms above are simply tensor products of Pauli matrices (Pauli terms).  
\end{remark}
See equation \eqref{DL3} for the explicit LCU decomposition of $A_{n,1}$. We do not provide a theorem for the periodic case, since it was thoroughly covered in previous works \cite{sato2021variational, childs2021high,kharazi2024explicit}. The representation of $A_{n,2}$ adopted in this paper is given in \eqref{periodic_LCU}. Our results for the $1$-dimensional Neumann/Robin case formulated in Section \ref{1d_NR_sub} are given below. 
\begin{theorem}{(Neumann/Robin Case)} \thlabel{thm2}
    The matrix $A_{n,3}$ admits a decomposition of the form
    \begin{equation} \label{thm2_eq}
        A_{n,3} = \sum_{l=1}^{10}c_{3,l}R_{3,l}\, , 
    \end{equation}
    where $c_{3,l} \in \mathbb{R} $ and each $R_{3,l}$ is unitary. Moreover, if a single ancilla qubit is available, then a given $R_{3,l}$ may be implemented using at most a single $Z$ gate, $2$-Hadamard gates, and $O(n)$ Toffoli, $X$, and $CNOT$ gates. 
\end{theorem}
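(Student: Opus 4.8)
The plan is to decompose $A_{n,3} = h_3^2 L_{n,3}$ as a perturbation of the Dirichlet matrix $A_{n,1}$ whose LCU is already established in Theorem~\ref{thm1}. Writing out $A_{n,3}$ explicitly, it agrees with the (suitably sized) tridiagonal $-2,1,1$ matrix on all interior rows, and differs only in the first and last rows: the $(0,0)$ entry is $-2+2a_0 h_3$ instead of $-2$, the $(0,1)$ entry is $2$ instead of $1$, the $(N-1,N-2)$ entry is $2$ instead of $1$, and the $(N-1,N-1)$ entry is $-2-2a_1 h_3$ instead of $-2$. So the plan is to first write $A_{n,3} = \widetilde{A}_{n,1} + E$, where $\widetilde{A}_{n,1}$ is the tridiagonal matrix with $-2$ on the diagonal and $1$ on the off-diagonals (this is essentially $A_{n,1}$ up to the boundary-correction structure already handled in Section~\ref{Decomp_sect}), and $E$ is a sparse correction matrix supported only on the four corner entries. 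Each nonzero entry of $E$ is, up to a real scalar, one of the rank-one ``projector-type'' matrices built from the tau-basis elements $\tau_j$ localized at position $0$ or $N-1$; concretely, a matrix like $\tau_1 \otimes \tau_0^{\otimes(n-1)}$ or $\tau_3 \otimes \tau_3^{\otimes(n-1)}$ picks out a single diagonal corner entry, and matrices of the form $\tau_2\otimes(\cdots)$ or $\tau_0^{\otimes(n-1)}\otimes\tau_2$-type products pick out the single off-diagonal corner entries.

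Next I would convert each such tau-basis rank-one matrix into a short LCU of unitaries. The key elementary identities are $\tau_0 = \tfrac12(I+Z)$, $\tau_3 = \tfrac12(I-Z)$, $\tau_1 = \tau_0$, and $\tau_2 = \tfrac12(X - iY)$ — but since we want real circuits, it is cleaner to realize the off-diagonal corner contributions using the increment/decrement gates $S_k, S_k^\dagger$ (which are already unitary and whose corner behavior, when composed with diagonal projectors, isolates exactly the corner off-diagonal entries), combined with the diagonal pieces $I, Z$. This is the same mechanism used to build the interior tridiagonal part in the Dirichlet proof; the point is that the boundary correction $E$ reuses the same stock of unitaries ($S_k$, $S_k^\dagger$, tensor products of $I$ and $Z$, and one controlled-$Z$-type gate) rather than introducing new ones. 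Counting: Theorem~\ref{thm1} supplies $5$ terms for the $\widetilde A_{n,1}$ part; the four corner perturbations, after combining terms that share the same unitary and differ only in scalar coefficient (e.g.\ the two diagonal corners both use $Z\otimes Z^{\otimes(n-1)}$-type operators, the two off-diagonal corners pair up with $S_k$ and $S_k^\dagger$), collapse to at most $5$ additional terms, giving the bound $10$. The appearance of a single $Z$ gate in the circuit description (versus Theorem~\ref{thm1}, which has no stray $Z$) comes precisely from the extra diagonal term needed to produce the asymmetric $2a_0h_3$ versus $-2a_1h_3$ corner entries, which cannot be absorbed into the symmetric structure.

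The gate-complexity claim for each $R_{3,l}$ then follows from the circuit constructions in Section~\ref{Decomp_sect}: each unitary in the list is either (i) a tensor product of $I$ and $Z$ gates, implementable with $O(n)$ single-qubit gates and no ancilla; (ii) an increment or decrement gate $S_n^{\dagger}$, implementable with $O(n)$ Toffoli, $X$, and $CNOT$ gates — here the single ancilla qubit is used as the standard scratch qubit for the cascade of multi-controlled $NOT$s; or (iii) a product of type-(i) and type-(ii) operators together with at most two Hadamard gates (needed when a $\tau_2$-type off-diagonal projector must be conjugated into a form the increment gate can act on), plus the one $Z$ gate noted above. Taking the worst case over $l$ yields the stated bound: at most one $Z$, at most two Hadamards, and $O(n)$ Toffoli/$X$/$CNOT$ gates. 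The main obstacle I anticipate is bookkeeping rather than conceptual: one must verify that the four corner corrections, when expanded in unitaries, genuinely overlap/share unitaries with each other and with the interior terms enough to keep the total at $10$ and not $5+8=13$; getting the asymmetric Robin coefficients $2a_0h_3$ and $-2a_1h_3$ to land on the \emph{same} unitary (with different complex-conjugate-free real coefficients) is the delicate point, and is exactly why a lone $Z$ gate survives in the circuit description. Everything else reduces to the Dirichlet analysis of Theorem~\ref{thm1} plus the routine identities for $\tau_j$ in terms of $I, X, Z$ and the increment gates.
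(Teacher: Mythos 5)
Your overall skeleton matches the paper's: both write $A_{n,3}=A_{n,1}+E_{n}$ where $E_{n}$ is supported on the four corner entries, and both reuse the Dirichlet LCU of \thref{thm1} for the bulk. The gap is in how you handle $E_{n}$. The two diagonal corner entries are, up to real scalars, the rank-one projectors $\tau_{0}^{\otimes n}$ and $\tau_{3}^{\otimes n}$, and your proposed toolbox for them --- tensor products of $I$ and $Z$ (``$Z\otimes Z^{\otimes(n-1)}$-type operators''), possibly composed with $S_{n},S_{n}^{\dagger}$ --- cannot produce them with $O(1)$ terms: a projector onto a single computational basis state expanded in diagonal Pauli strings requires all $2^{n}$ of them, and composing with increment gates shifts entries but does not concentrate them. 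The same problem infects your treatment of the off-diagonal corners via $\tau_{0}^{\otimes n}S_{n}^{\dagger}$-type products, since the diagonal projector must still be decomposed. So your claimed collapse to ``at most $5$ additional terms'' is not justified by anything in the proposal; without a new idea the count is exponential, not $10$.

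The missing ingredient is the signed anti-diagonal unitary $C_{n}^{-}$ of \eqref{Cpm}: the product $X^{\otimes n}\cdot C_{n}^{-}$ is the diagonal unitary equal to $+1$ exactly at the two corner positions and $-1$ elsewhere, so $\tau_{0}^{\otimes n}+\tau_{3}^{\otimes n}=\tfrac{1}{2}\bigl(I^{\otimes n}+X^{\otimes n}C_{n}^{-}\bigr)$, and multiplying by $\tfrac{1}{2}\bigl(I^{\otimes n}\pm I^{\otimes n-1}\otimes Z\bigr)$ separates the two corners. This yields the four-term LCUs \eqref{tau_zero}--\eqref{tau_3} for $\tau_{0}^{\otimes n}$ and $\tau_{3}^{\otimes n}$, and the single term $X^{\otimes n}\cdot C_{n}\cdot(I^{\otimes n-1}\otimes X)$ (with $C_{n}=\tfrac{1}{2}C_{n}^{-}+\tfrac{1}{2}X^{\otimes n}$) covers both off-diagonal corners at once; after merging repeated unitaries with the five Dirichlet terms one lands at exactly $10$. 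Note also that the ``single $Z$ gate'' and ``$2$ Hadamard gates'' in the statement come from this mechanism --- the trailing $I^{\otimes n-1}\otimes Z$ factor and the Hadamard conjugation of the multicontrol-$NOT$ inside $C_{n}^{-}$ via \eqref{basic_decomp_1} --- not, as you speculate, from the asymmetry of the Robin coefficients $2a_{0}h_{3}$ versus $-2a_{1}h_{3}$, which only affects the real scalar coefficients.
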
 
\begin{remark}
    We have suppressed the dependence of the coefficients $c_{3,l}$ on $n$ and the parameters of the relevant boundary operators (see \eqref{Robin_cond}). We note that some values of $c_{3,l}$ may vanish. For example, only $7$ nontrivial terms are required for pure Neumann boundary conditions.  
\end{remark}
The exact decomposition of $A_{n,3}$ is given in \eqref{NR_LCU}. The following theorem is an extension of \thref{thm1} and \thref{thm2} to the higher-dimensional setting. It is valid for box-type domains and the boundary conditions described in Section \ref{Higher_D_Sub}. 

\begin{theorem}{(Higher-Dimensions)} \thlabel{thm3}
    Let $d\geq 2$ be an integer, $\bold{n} = (n_{1}, \dots, n_{d})$, $n^{*} = \max\{n_{1}, \dots, n_{d}\}$, and $\bold{\alpha} = (\alpha_{1}, \dots, \alpha_{d})$, where each $n_{i}>2$ is an integer and $\alpha_{i} \in \{1,2,3\}$. Then, the matrix $L^{d}_{\bold{n},\bold{\alpha}}$ admits a decomposition of the form
    \begin{equation} \label{thm3_eq}
       L^{d}_{\bold{n}, \bold{\alpha}} = \sum_{i=1}^{d}\sum_{j=1}^{L_{\alpha_{i}}}c^{i}_{\alpha_{i},j}R^{i}_{\alpha_{i},j}\, , 
    \end{equation}
    where $L_{\alpha_{i}} \leq 10$, $c^{i}_{\alpha_{i},j} \in \mathbb{R}$, and each $R^{i}_{\alpha_{i},j}$ is unitary. Moreover, a given $R^{i}_{\alpha_{i},j}$ may be implemented using at most a single $Z$ gate, $2$-Hadamard gates, and $O(n^{*})$ Toffoli, $X$, and $CNOT$ gates.
\end{theorem}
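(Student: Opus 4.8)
The plan is to reduce the higher-dimensional statement to the one-dimensional results \thref{thm1}, \thref{thm2} and the known periodic decomposition \eqref{periodic_LCU}, exploiting the fact that $L^{d}_{\mathbf{n},\boldsymbol{\alpha}}$ is by construction a sum of Kronecker factors in which each one-dimensional operator $L_{n_{i},\alpha_{i}}$ appears padded with identities. First I would write, for each $i$, $L_{n_{i},\alpha_{i}} = h_{i}^{-2}A_{n_{i},\alpha_{i}}$, where $h_{i}>0$ is the grid spacing associated with that discretization (Section \ref{prob_sect}), so that
\begin{equation*}
L^{d}_{\mathbf{n},\boldsymbol{\alpha}} = \sum_{i=1}^{d} h_{i}^{-2}\,\bigl(I^{\otimes(n_{i+1}+\cdots+n_{d})}\otimes A_{n_{i},\alpha_{i}}\otimes I^{\otimes(n_{1}+\cdots+n_{i-1})}\bigr).
\end{equation*}
Applying \thref{thm1} when $\alpha_{i}=1$, the decomposition \eqref{periodic_LCU} when $\alpha_{i}=2$, and \thref{thm2} when $\alpha_{i}=3$, I obtain in every case $A_{n_{i},\alpha_{i}} = \sum_{j=1}^{L_{\alpha_{i}}} c_{\alpha_{i},j} R_{\alpha_{i},j}$ with $L_{\alpha_{i}}\le 10$, $c_{\alpha_{i},j}\in\mathbb{R}$, and each $R_{\alpha_{i},j}$ a genuine $2^{n_{i}}\times 2^{n_{i}}$ unitary.

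Substituting these expansions and distributing the Kronecker product over the finite sum gives
\begin{equation*}
L^{d}_{\mathbf{n},\boldsymbol{\alpha}} = \sum_{i=1}^{d}\sum_{j=1}^{L_{\alpha_{i}}} \bigl(h_{i}^{-2} c_{\alpha_{i},j}\bigr)\,\bigl(I^{\otimes(n_{i+1}+\cdots+n_{d})}\otimes R_{\alpha_{i},j}\otimes I^{\otimes(n_{1}+\cdots+n_{i-1})}\bigr),
\end{equation*}
and I would set $c^{i}_{\alpha_{i},j} := h_{i}^{-2} c_{\alpha_{i},j}\in\mathbb{R}$ and $R^{i}_{\alpha_{i},j} := I^{\otimes(n_{i+1}+\cdots+n_{d})}\otimes R_{\alpha_{i},j}\otimes I^{\otimes(n_{1}+\cdots+n_{i-1})}$; the latter is unitary because a Kronecker product of unitaries is unitary, which is exactly \eqref{thm3_eq} with at most $10d$ terms. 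For the circuit cost I would observe that a circuit for $R^{i}_{\alpha_{i},j}$ is nothing but the circuit for $R_{\alpha_{i},j}$ promised by \thref{thm1}/\thref{thm2} (or the periodic construction), run on the $n_{i}$-qubit block encoding the $i$-th coordinate together with the single shared ancilla, and acting as the identity (hence contributing no gates) on the remaining $\sum_{k\ne i} n_{k}$ data qubits; since each such circuit uses at most one $Z$ gate, two Hadamard gates, and $O(n_{i}) = O(n^{*})$ Toffoli, $X$, and $CNOT$ gates, the claimed bounds follow immediately.

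There is no genuine analytic obstacle here; the work is entirely bookkeeping, and I expect the only point that requires an actual argument rather than a trivial observation is ancilla management. Concretely, I must match the tensor-factor ordering in the definition of $L^{d}_{\mathbf{n},\boldsymbol{\alpha}}$ so the padding identities land on the correct registers; I must confirm the periodic decomposition \eqref{periodic_LCU} also meets the ``$\le 10$ terms, $\le 2$ Hadamards, $O(n)$ Toffoli/$X$/$CNOT$, $\le 1$ extra $Z$'' budget, so that one uniform bound covers all $\alpha_{i}\in\{1,2,3\}$; and, most importantly, I must verify that the ancilla used by the one-dimensional circuits is disentangled and returned to $\lvert 0\rangle$ at the end of each $R_{\alpha_{i},j}$, so that a single ancilla can be reused across all $i$ and $j$ without the factors $R^{i}_{\alpha_{i},j}$ interfering with one another.
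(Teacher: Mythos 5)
Your algebra is exactly the paper's: equation \eqref{higher_dim_LCU} is obtained precisely by substituting the one-dimensional LCUs \eqref{DL3}, \eqref{periodic_LCU}, \eqref{NR_LCU} into the Kronecker sum defining $L^{d}_{\bold{n},\bold{\alpha}}$, rescaling by $h_{\alpha_i}^{-2}$, and distributing, and the unitarity of the identity-padded factors is immediate. That part of your argument matches the paper's proof (which simply cites \eqref{higher_dim_LCU} and \thref{HD_lem}).

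The gap is in the circuit-cost claim. \thref{thm3}, unlike \thref{thm1} and \thref{thm2}, does \emph{not} grant you an ancilla qubit --- see the remark immediately following the theorem: ``this result requires no ancilla.'' Your implementation of $R^{i}_{\alpha_i,j}$ runs the one-dimensional circuit ``together with the single shared ancilla,'' which proves a strictly weaker statement than the one asserted. The paper closes this in \thref{HD_lem} by observing that $R^{i}_{\alpha_i,j}$ acts as the identity on the $\sum_{k\neq i} n_k \geq n_j > 2$ qubits belonging to the other coordinate registers, and any one of those idle qubits serves as a \emph{borrowable} ancilla for the Gidney increment construction and the Barenco multicontrol decomposition (borrowable bits need not start in $\lvert 0\rangle$; they are merely restored to their initial state). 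Relatedly, your stated worry about returning the ancilla to $\lvert 0\rangle$ ``so that a single ancilla can be reused across all $i$ and $j$'' misplaces the subtlety: each $R^{i}_{\alpha_i,j}$ is a standalone unitary in the linear combination, not a stage of one long circuit, so all that matters is that the borrowed qubit is restored by the end of each individual circuit --- which the cited constructions already guarantee. With the borrowable-qubit observation in place, your cost bound $O(n_i)=O(n^{*})$ goes through ancilla-free, and your check that the periodic terms $I^{\otimes n}, S_n, S_n^{\dagger}$ fit the same budget is correct since they are a subset of the Dirichlet terms.
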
 
\begin{remark}
    Unlike \thref{thm1} and \thref{thm2}, this result requires no ancilla. 
\end{remark}
Both $c^{i}_{\alpha_{i},j}$ and $R^{i}_{\alpha_{i},j}$ are related to the $1$-dimensional cases in straightforward way through \eqref{High_D_rel_low}. Equations  \eqref{higher_dim_LCU} and \eqref{High_D_rel_low} describe an explicit representation of $L^{d}_{\bold{n}, \bold{\alpha}}$.

\section{LCU Decompositions} \label{Decomp_sect}

In this section, we derive unitary expansions for each $A_{n,j}$. Crucially, the individual terms of our decompositions can be implemented efficiently, as shown in Section \ref{complexity_sect}. The proposed decompositions appear to be new for all cases except $j=2$. We conclude the section by extending these results to higher-dimensional operators of the form $L^{d}_{\bold{n}, \bold{\alpha}}$.  

\subsection{Periodic case}
Several authors have proposed $3$ term LCU decompositions for $A_{n,2}$ \cite{sato2021variational, childs2021high, kharazi2024explicit}. In \cite{childs2021high} and \cite{kharazi2024explicit}, the following decomposition for the $1$-dimensional Laplacian with periodic boundary conditions is considered: 
\begin{equation}\label{periodic_LCU}
    A_{n,2} = -2I + S_{n}+S_{n}^{\dagger} \, , 
\end{equation}
where $S_{n}$ is the increment gate defined in \eqref{inc_gate_def}. The authors of \cite{kharazi2024explicit} highlight that the matrices $S_{n}$ and $S_{n}^{\dagger}$ can be implemented using $O(n)$ Toffoli, $CNOT$, and $X$ gates by the construction developed in \cite{gidney2015constructing}. Note that this method requires a single ancilla qubit. This approach marks a significant advantage over the more straightforward ancilla-free implementation described in \cite{douglas2009efficient}, which requires $O(n^{2})$ such gates. 

\subsection{Dirichlet case}
\label{dirichlet_decomp_fund}
We now introduce a novel decomposition for the $1$-dimensional Dirichlet Laplacian. Consider the following matrices 
\begin{equation}\label{Cpm}
    C_{n} \coloneqq \begin{pmatrix}
            &  &  &  & 1 \\
            &  &  & 0 &  \\
            & & \iddots & &  \\ 
            & 0& & & \\
            1& & & & \\
            \end{pmatrix}_{2^{n}\times 2^{n}}\, , \qquad  \qquad 
              C_{n}^{-} \coloneq \begin{pmatrix}
            &  &  &  & 1 \\
            &  &  & -1 &  \\
            & & \iddots & &  \\ 
            & -1& & & \\
            1& & & & \\
            \end{pmatrix}_{2^{n}\times 2^{n}}
\end{equation}
and note that 
\begin{equation} \label{Cnd}
   C_{n} = \frac{1}{2}C_{n}^{-}+\frac{1}{2}X^{\otimes n} \, . 
\end{equation}
It is clear from \eqref{basic_LD_mat}, \eqref{basic_LP_mat}, and \eqref{Cnd} that 
\begin{equation}\label{A2d}
    A_{n,1} = A_{n,2} - C_{n} = A_{n,2}-\frac{1}{2}X^{\otimes n}-\frac{1}{2}C^{-}_{n} \, .  
\end{equation}
$A_{n,2}$ can be handled using \eqref{periodic_LCU}, but we are left to represent $C_{n}^{-}$ using standard quantum gates. We claim that  
\begin{equation} \label{basic_decomp_1}
    -C^{-}_{n} = \left( X \otimes I^{\otimes n-1}\right)\cdot B \cdot  \left(X  \otimes I^{\otimes n-1}\right)\cdot  \left(I\otimes C_{n-1}(Z)\right)\cdot B \cdot \left(X\otimes I^{\otimes n-1}\right) \, ,
\end{equation}
where 
\begin{equation*}
    B  \coloneqq \prod_{t=1}^{n-1}CNOT(0,t).
\end{equation*}
To verify the claim, observe first that $B$ can be represented in matrix form by
\begin{equation}\label{B_k_exp}
    B =  \begin{pmatrix}
        I^{\otimes n-1 } &  \\ 
                           &  X^{\otimes n-1}
    \end{pmatrix} \, , 
\end{equation}
which can be checked by direct multiplication of the explicit $CNOT(0,t)$ operators given by \eqref{CNOT_def}. Also, the relations
\begin{equation}\label{int1}
    \left(X \otimes I^{\otimes n-1}\right)\cdot B\cdot  \left(X\otimes I^{\otimes n-1}\right) = \begin{pmatrix}
        X^{\otimes n-1} & \\
                        & I^{\otimes n-1}
    \end{pmatrix}
\end{equation}
and 
\begin{equation}\label{int2}
     B\cdot \left(X\otimes I^{\otimes n-1}\right)
    =  \begin{pmatrix}
        & I^{\otimes n-1} \\
        X^{\otimes n-1} &
    \end{pmatrix}
\end{equation}
follow from some straightforward matrix algebra and substitution of $B$ using \eqref{B_k_exp}. Starting from the right hand side of \eqref{basic_decomp_1}, we combine \eqref{int1} and \eqref{int2} to yield
\begin{align*}
 \begin{pmatrix}
        X^{\otimes n-1} & \\
                        & I^{\otimes n-1}
    \end{pmatrix} \cdot \left(I\otimes C_{n-1}(Z)\right)\cdot \begin{pmatrix}
        & I^{\otimes n-1} \\
        X^{\otimes n-1} &
    \end{pmatrix} &= \begin{pmatrix}
        X^{\otimes n-1} & \\
                        & I^{\otimes n-1}
    \end{pmatrix} \cdot \begin{pmatrix}
        C_{n-1}(Z) & \\
                        & C_{n-1}(Z)
    \end{pmatrix} \cdot \begin{pmatrix}
        & I^{\otimes n-1} \\
        X^{\otimes n-1} &
    \end{pmatrix} \\
     &= \begin{pmatrix}
          &   &        &   &-1 \\
            &  &        &1   & \\
            &   & \iddots &   & \\
            & 1  &        &  & \\
           -1 &   &        &   &  
     \end{pmatrix}_{2^{n}\times 2^{n}} \\
     &= - C^{-}_{n} \, , 
\end{align*}
as desired, where the third line is a consequence of the definition in \eqref{Cpm}.

Finally, we combine \eqref{periodic_LCU}, \eqref{A2d}, and \eqref{basic_decomp_1} to conclude that 
\begin{align} \label{DL3}
\begin{split}
A_{n,1} &= -2I^{\otimes n}+S_{n}+S_{n}^{\dagger}-\frac{1}{2}X^{\otimes n} \\ &+\frac{1}{2}\left(X \otimes I^{\otimes n-1}\right)\cdot B \cdot  \left(X  \otimes I^{\otimes n-1}\right)\cdot  (I\otimes C_{n-1}(Z))\cdot B \cdot \left(X\otimes I^{\otimes n-1}\right) \, .
\end{split}
\end{align}
Equation \eqref{DL3} is our fundamental decomposition of the $1$-dimensional Dirichlet Laplacian and will be shown to satisfy all the criteria of \thref{thm1}.

\subsection{Neumann/Robin case}

The techniques developed in the previous subsection for the Dirichlet Laplacian are now used to derive an efficient decomposition for the Neumann/Robin case. It follows from \eqref{basic_LD_mat} and \eqref{basic_LN_mat} that 
\begin{equation}\label{nm1}
    A_{n, 3} = A_{n,1} + \begin{pmatrix}
        2a_{0}h_{3} & 1 & 0 & \cdots & 0\\
        0   &  0 &   &   &  \\
        \vdots    &  \ddots &  \ddots &   & \vdots   \\
            &   &   & 0  &  0  \\
         0  &  \cdots &  0 & 1 & -2a_{1}h_{3} 
    \end{pmatrix}_{2^{n}\times 2^{n}} \, .
\end{equation}
Let $E_{n} = A_{n,3} - A_{n,1}$. Equations \eqref{tau_def}, \eqref{Cpm}, \eqref{nm1} and some quick calculations will verify that
\begin{equation}\label{E_n+1}
    E_{n} = 2a_{0}h_{3} \tau_{0}^{\otimes n }-2a_{1}h_{3}\tau_{3}^{\otimes n} + X^{\otimes n} \cdot C_{n} \cdot \left(I^{\otimes n-1}\otimes X\right) \, .
\end{equation}
An efficient decomposition for tensor products of matrices belonging to $\mathbb{T}$ is presented in Appendix \ref{Apen_A}. In particular, equations \eqref{tau_zero} and \eqref{tau_3} show that 
\begin{align*}
    \tau_{0}^{\otimes n} &= \frac{1}{4}I^{\otimes n}+\frac{1}{4}X^{\otimes n}\cdot C_{n}^{-}+\frac{1}{4}X^{\otimes n}\cdot C^{-}_{n}\cdot I^{\otimes n-1}\otimes Z+\frac{1}{4}I^{\otimes n-1}\otimes Z \\
    \tau_{3}^{\otimes n} &= \frac{1}{4}I^{\otimes n}+\frac{1}{4}X^{\otimes n}\cdot C_{n}^{-}-\frac{1}{4}X^{\otimes n}\cdot C^{-}_{n}\cdot I^{\otimes n-1}\otimes Z-\frac{1}{4}I^{\otimes -1}\otimes Z \, , 
\end{align*}
which can be combined with \eqref{E_n+1} to yield 
\begin{align} \label{En_explicit}
\begin{split}
    E_{n} &= \frac{(a_{0}h_{3}-a_{1}h_{3})}{2}I^{\otimes n}+\frac{(a_{0}h_{3}-a_{1}h_{3})}{2}X^{\otimes n}\cdot C_{n}^{-} +\frac{(a_{0}h_{3}
    +a_{1}h_{3})}{2}X^{\otimes n}\cdot C^{-}_{n}\cdot \left(I^{\otimes n-1}\otimes Z\right) \\ &+\frac{(a_{0}h_{3}+a_{1}h_{3})}{2}I^{\otimes n-1}\otimes Z + \frac{1}{2}I^{\otimes n-1}\otimes X+ \frac{1}{2}X^{\otimes n} \cdot C^{-}_{n} \cdot \left(I^{\otimes n-1}\otimes X\right) \, .
\end{split}
\end{align}
Finally, equations \eqref{nm1} and \eqref{En_explicit} can be combined to derive our LCU for the Neumann/Robin case: 
\begin{align} \label{NR_LCU}
    \begin{split}
        A_{n,3} &= \frac{(-4+(a_{0}-a_{1})h_{3})}{2}I^{\otimes n} + \frac{1}{2}I^{\otimes n-1}\otimes X - \frac{1}{2}X^{\otimes n}+\frac{(a_{0}+a_{1})h_{3}}{2}I^{\otimes n-1}\otimes Z \\ 
        &+ S_{n} + S_{n}^{\dagger}-\frac{1}{2}C_{n}^{-}+\frac{(a_{0}-a_{1})h_{3}}{2}X^{\otimes n}\cdot C_{n}^{-}\\ &+\frac{(a_{0}+a_{1})h_{3}}{2}X^{\otimes n}\cdot C_{n}^{-}\cdot \left(I^{\otimes n-1}\otimes Z\right) 
        + \frac{1}{2}X^{\otimes n}\cdot C_{n}^{-}\cdot \left(I^{\otimes n-1}\otimes X \right)\, . 
    \end{split}
\end{align}
Equation \eqref{NR_LCU} involves the sum of at most $10$ unitary terms (the exact number of terms depends on the values of $a_{0},a_{1}$, and $h_{3}$). Recall that an explicit representation of $C_{n}^{-}$ using standard quantum gates is given by equation \eqref{basic_decomp_1}. We leave $C_{n}^{-}$ in equation \eqref{NR_LCU} with the understanding that it will be substituted with \eqref{basic_decomp_1} in the proceeding analysis. 

\subsection{Higher-Dimensional Problems}

A representation of the form \eqref{thm3_eq} is now easily obtained from our work in the $1$-dimensional setting. Indeed, using the set-up proposed in Section \ref{Higher_D_Sub}, we find that 
\begin{align} \label{higher_dim_LCU}
    \begin{split}
           L^{d}_{\bold{n},\bold{\alpha}} &= \sum_{i=1}^{d}I^{\otimes n_{d}}\otimes I^{\otimes n_{d-1}}\otimes \cdots \otimes I^{\otimes n_{i+1}}\otimes L_{n_{i},\alpha_{i}} \otimes I^{\otimes n_{i-1}}\otimes \cdots \otimes I^{\otimes n_{1}} \\
           &= \sum_{i=1}^{d}\sum_{j=1}^{L_{\alpha_{i}}}\frac{c_{\alpha_{i},j}}{h^{2}_{\alpha_{i}}}I^{\otimes n_{d}}\otimes I^{\otimes n_{d-1}}\otimes \cdots \otimes I^{\otimes n_{i+1}}\otimes R_{\alpha_{i},j} \otimes I^{\otimes n_{i-1}}\otimes\cdots \otimes I^{\otimes n_{1}} \, , 
    \end{split}
\end{align}
where $L_{1} = 5$, $L_{2} =3$, and $L_{3} = 10$. The second line of \eqref{higher_dim_LCU} follows from substituting each $ L_{n_{i},\alpha_{i}}$ using an appropriate choice of \eqref{periodic_LCU}, \eqref{DL3}, or \eqref{NR_LCU} (note that we convert from $A_{n_{i},\alpha_{i}}$ to $L_{n_{i},\alpha_{i}}$ via a multiplicative factor of $h_{\alpha_{i}}^{-2}$). For consistency with \eqref{thm3_eq}, we define the following: 
\begin{align} \label{High_D_rel_low}
\begin{split}
    c^{i}_{\alpha_{i},j} &\coloneq \frac{c_{\alpha_{i},j}}{h_{\alpha_{i}}^{2}} \\ 
    R^{i}_{\alpha_{i},j} & \coloneq I^{\otimes n_{d}}\otimes I^{\otimes n_{d-1}}\otimes \cdots \otimes I^{\otimes n_{i+1}}\otimes R_{\alpha_{i},j} \otimes I^{\otimes n_{i-1}}\otimes\cdots \otimes I^{\otimes n_{1}} \, .
\end{split}
\end{align}

\section{Complexity Analysis}
\label{complexity_sect}
Explicit LCU decompositions for several classes of Poisson problems were derived in Section \ref{Decomp_sect}. Upper bounds on the gate count and the depth of the circuits corresponding to individual unitary terms are described in this section. To accomplish this, we make the standing assumption that a single ancilla qubit is available in the $1$-dimensional setting and utilize well-known constructions for both increment gates and multicontrol $NOT$ gates. We conclude this section with a comparison, in terms of gate cost, of our Dirichlet LCU decomposition and that of \cite{kharazi2024explicit} in the context of the VQLS algorithm. 

\subsection{Resource Estimates for LCU Terms}

We begin our analysis with the $1$-dimensional periodic, Dirichlet, and Neumann/Robin cases. We note that an analogous result for the periodic decomposition utilized in this paper was already given in \cite{kharazi2024explicit}. 

\begin{lemma}{($1$-dimensional LCU Cost)}\thlabel{lm1DLCU}
Given access to a single ancilla qubit, an element of the LCU decomposition for $A_{n,1}$, $A_{n,2}$, or $A_{n,3}$,  given by \eqref{DL3}, \eqref{periodic_LCU}, or \eqref{NR_LCU}, respectively, can be implemented using at most a single $Z$ gate, $2$-Hadamard gates, and $O(n)$ $X$, $CNOT$, and Toffoli gates. Moreover, this implementation can be accomplished with a circuit whose depth is at most $O(n)$.   
\end{lemma}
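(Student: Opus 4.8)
The plan is to handle each of the three decompositions term-by-term, reducing every unitary factor appearing in \eqref{DL3}, \eqref{periodic_LCU}, and \eqref{NR_LCU} to one of a small number of primitive building blocks whose cost is already known or easily bounded. The primitives that occur are: tensor products of Pauli gates ($I^{\otimes n}$, $X^{\otimes n}$, $I^{\otimes n-1}\otimes X$, $I^{\otimes n-1}\otimes Z$), which cost $O(n)$ single-qubit gates and depth $1$; the increment/decrement gates $S_n, S_n^{\dagger}$, which by the construction of \cite{gidney2015constructing} cited in the paper cost $O(n)$ Toffoli, $CNOT$, and $X$ gates using one ancilla; the multicontrol phase gate $C_{n-1}(Z)$; and the conjugating layer $B = \prod_{t=1}^{n-1}CNOT(0,t)$ together with the single-qubit $X$ gates that flank it in \eqref{basic_decomp_1}.

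First I would dispose of the periodic case, which is immediate: \eqref{periodic_LCU} has terms $-2I^{\otimes n}$, $S_n$, $S_n^{\dagger}$, each costing $O(n)$ gates and $O(n)$ depth (this is already noted in \cite{kharazi2024explicit}). Next, for the Dirichlet case I would read off from \eqref{DL3} that the five terms are $I^{\otimes n}$, $S_n$, $S_n^{\dagger}$, $X^{\otimes n}$, and the composite unitary $\tfrac12 (X\otimes I^{\otimes n-1})B(X\otimes I^{\otimes n-1})(I\otimes C_{n-1}(Z))B(X\otimes I^{\otimes n-1})$ from \eqref{basic_decomp_1}. For the last term I would argue that $B$ is a product of $n-1$ $CNOT$ gates (so $O(n)$ gates, and depth $O(n)$ in the naive serial implementation — which already suffices for the stated $O(n)$ depth bound), that the flanking $X$ gates contribute $O(1)$, and that $C_{n-1}(Z)$ is a multicontrol-$Z$ on $n-1$ controls, which (with the one available ancilla) decomposes into $O(n)$ Toffoli and $CNOT$ gates by the standard $V$-chain construction; conjugating by Hadamards on the target turns it into a multicontrol-$X$ if one prefers, at the cost of the $\le 2$ Hadamards flagged in the statement. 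Summing, this term uses $O(n)$ $X$, $CNOT$, Toffoli gates, at most $2$ Hadamards, and $O(n)$ depth.

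The Neumann/Robin case is then essentially a superset: inspecting \eqref{NR_LCU}, the ten terms are $I^{\otimes n}$, $I^{\otimes n-1}\otimes X$, $X^{\otimes n}$, $I^{\otimes n-1}\otimes Z$, $S_n$, $S_n^{\dagger}$, $C_n^{-}$, and three terms of the form $X^{\otimes n}\cdot C_n^{-}\cdot (\text{single-qubit Pauli on the last qubit})$. By \eqref{Cnd} and \eqref{basic_decomp_1}, $C_n^{-}$ itself is (up to sign) the composite unitary analyzed in the Dirichlet case, so it costs $O(n)$ gates, $\le 2$ Hadamards, and $O(n)$ depth; pre- and post-multiplying by $X^{\otimes n}$ and by one more single-qubit Pauli adds only $O(n)$ and $O(1)$ gates respectively; the lone $Z$ gate in the statement is exactly the factor $I^{\otimes n-1}\otimes Z$ appearing in one such term. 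The terms that are pure Pauli tensor products cost $O(n)$ depth-$1$. Collecting the worst case over all ten terms gives the claimed bound: at most one $Z$, at most two Hadamards, $O(n)$ $X$/$CNOT$/Toffoli gates, and $O(n)$ depth.

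I do not anticipate a serious obstacle — the proof is an accounting exercise — but the step requiring the most care is the cost of $C_{n-1}(Z)$ (equivalently the multicontrol $NOT$ operations $CNOT(C,t)$): one must invoke the right decomposition to get a \emph{linear} (not quadratic) gate count and depth, and this is precisely where the single ancilla qubit is used. I would cite the standard construction (as the paper does elsewhere for $S_n$, $S_n^{\dagger}$, and for multicontrol $NOT$) and note that without the ancilla one would only get $O(n^2)$, matching the remark comparing \cite{gidney2015constructing} with \cite{douglas2009efficient}. A secondary point worth stating explicitly is that composing $O(1)$ factors each of $O(n)$ depth still yields $O(n)$ depth, so the depth claim for the composite terms follows from the gate-count claim with no extra work.
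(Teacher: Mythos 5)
Your proposal is correct and follows essentially the same route as the paper: both reduce every term to Pauli tensor factors, the increment gates $S_{n},S_{n}^{\dagger}$ implemented via \cite{gidney2015constructing}, and the composite $C_{n}^{-}$ term of \eqref{basic_decomp_1}, whose cost hinges on conjugating $C_{n-1}(Z)$ by Hadamards and decomposing the resulting multicontrol $NOT$ into $O(n)$ Toffoli gates using the one available borrowable qubit. The only caveat is terminological: the paper invokes Lemmas 7.2--7.3 of \cite{barenco1995elementary}, which achieve linear cost with a \emph{single} borrowable qubit, whereas the ``V-chain'' in its usual form requires $n-2$ clean ancillas, so with only one ancilla you should cite the former construction.
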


\begin{proof}
    First, let us consider $A_{n,1}$. Recall from \eqref{DL3} that the nontrivial unitary terms for the LCU decomposition of $A_{n,1}$ are $S_{n}$, $S_{n}^{\dagger}$, and $C_{n}^{-}$ (the others being $I^{\otimes n}$ and $X^{\otimes n}$). As shown in \cite{gidney2015constructing}, the increment gate $S_{n}$ can be implemented using $O(n)$ Toffoli, $X$, and $CNOT$ gates (it is here that access to the ancilla qubit is used). Moreover, the depth of the corresponding circuit is $O(n)$. The cost of $S^{\dagger}_{n}$ is the same. 

    We are left to consider $C_{n}^{-}$. Recall that $-C^{-}_{n} = \left(X \otimes I^{\otimes n-1}\right)\cdot B \cdot  \left(X  \otimes I^{\otimes n-1}\right)\cdot  \left(I\otimes C_{n-1}(Z)\right)\cdot B \cdot \left(X\otimes I^{\otimes n-1}\right)$ by equation \eqref{basic_decomp_1}. It is clear that this term requires $3$-$X$ gates, $2n$-$CNOT$ gates, and a single $C_{n-1}(Z)$ gate. Note that 
    \begin{equation}
    I\otimes C_{n-1}(Z) = \left(I^{\otimes n-1}\otimes H\right) \cdot \left(I\otimes T_{n-1} \right)\cdot \left(I^{\otimes n-1}\otimes H \right)\, , 
    \end{equation}
    where $T_{n-1}$ is defined in \eqref{control_defs_tz}. As shown in the seminal work \cite{barenco1995elementary}, a $CNOT(C,n-1)$ gate admits a decomposition into $O(n)$-Toffoli gates, whose associated circuit has depth $O(n)$, provided $|C|\leq n-2$ (so that at least a single borrowable bit is available).  See Lemmas 7.2 and 7.3 in \cite{barenco1995elementary} for more details. Hence, we find that $C_{n}^{-}$ can be constructed using $O(n)$-Toffoli gates with a circuit of linear depth, since $I \otimes T_{n-1}$ is equivalent to $CNOT(\{1,2,\dots, n-2\}, n-1)$. This completes the argument for $A_{n,1}$.

    Next, we consider $A_{n,2}$. Observe that each unitary term on the right hand side of the $A_{n,2}$ decomposition given in \eqref{periodic_LCU} also belongs to the LCU decomposition of $A_{n,1}$. Hence, the estimate for $A_{n,1}$ is obviously applicable to $A_{n,2}$ as well. Finally, we consider $A_{n,3}$. It is clear from \eqref{basic_decomp_1}, \eqref{DL3}, and \eqref{NR_LCU} that any element of the Neumann/Robin expansion not contained in the LCU for $A_{n,1}$ can be implemented using at most a single $Z$ gate, $O(n)$-$X$ gates, and a $C_{n}^{-}$ gate. The desired result now follows by invoking the cost estimate of $C_{n}^{-}$ derived above. 
\end{proof}

Several remarks concerning \thref{lm1DLCU} are in order. First, this result should be seen only as a basic upper bound on the resource requirements for our LCU decomposition. We choose to implement the relevant circuits using a single ancilla qubit and popular decompositions for both multicontrol $NOT$ gates \cite{barenco1995elementary} and increment gates \cite{gidney2015constructing}. Second, realization of our most expensive gates can be optimized in various ways for space or depth. For example, by utilizing an additional $O(n)$ ancilla qubits one can reduce the depths of implementing our constructions to $O(\log{n})$; Proposition 2 of \cite{moore2001parallel} could be used to treat the gate $B$, Corollary 2 of \cite{he2017decompositions} for $C_{n-1}(Z)$, and either Proposition 1 of \cite{moore2001parallel} (to achieve constant depth) or a modified adder from \cite{draper2004logarithmic} (to achieve $O(\log{n})$ depth) could be applied to $S_{n}$. 

Our arguments readily extend to the higher-dimensional decompositions presented in Section \ref{Decomp_sect}. We emphasize that no ancilla qubit is required for the following estimates. Also, the remarks of the previous paragraph apply to this higher-dimensional case and are worth keeping in mind. In particular, there may be many locally borrowable qubits in the higher-dimensional setting, and significant reductions in depth are possible for appropriate parameter regimes.

\begin{lemma}{(Higher-Dimensional LCU Cost)} \thlabel{HD_lem}
An element of the LCU decomposition for $L^{d}_{\bold{n},\bold{\alpha}}$  given by \eqref{higher_dim_LCU} can be implemented using at most a single $Z$ gate, $2$-Hadamard gates, and $O(n^{*})$ $X$, $CNOT$, and Toffoli gates, where $n^{*} = \max\{n_{1},\dots, n_{d}\}$. Moreover, this implementation can be accomplished with a circuit whose depth is at most $O(n^{*})$.  
\end{lemma}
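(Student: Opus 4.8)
The plan is to reduce the higher-dimensional estimate to the one-dimensional case recorded in \thref{lm1DLCU}. The key observation is that every unitary term $R^{i}_{\alpha_{i},j}$ appearing in the decomposition \eqref{higher_dim_LCU} is, by \eqref{High_D_rel_low}, of the form $I^{\otimes n_{d}}\otimes \cdots \otimes I^{\otimes n_{i+1}}\otimes R_{\alpha_{i},j}\otimes I^{\otimes n_{i-1}}\otimes \cdots \otimes I^{\otimes n_{1}}$, where $R_{\alpha_{i},j}$ is one of the $1$-dimensional unitaries from \eqref{DL3}, \eqref{periodic_LCU}, or \eqref{NR_LCU} acting on the $n_{i}$-qubit register encoding the $i$-th spatial dimension. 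Tensoring with identities on the remaining qubits contributes no gates and does not increase circuit depth, so the gate count and depth of $R^{i}_{\alpha_{i},j}$ are exactly those of the $1$-dimensional factor $R_{\alpha_{i},j}$.

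First I would invoke \thref{lm1DLCU} applied with $n = n_{i}$: this gives that $R_{\alpha_{i},j}$ can be implemented with at most a single $Z$ gate, $2$ Hadamard gates, and $O(n_{i})$ Toffoli, $X$, and $CNOT$ gates, in circuit depth $O(n_{i})$. The one subtlety is the ancilla requirement: \thref{lm1DLCU} assumes access to a single ancilla qubit, needed to apply the $O(n)$-gate construction of \cite{gidney2015constructing} for $S_{n}$ and, inside the $C_{n}^{-}$ block, the borrowed-bit multicontrol-$NOT$ construction of \cite{barenco1995elementary}. In the higher-dimensional setting with $d\geq 2$ and each $n_{i}>2$, the qubits of the $d-1$ other registers are all in a definite computational-basis state relative to the factor being implemented and are available as borrowable (dirty) ancillas; in particular there are at least $n_{j} > 2$ such qubits for any $j\neq i$. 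Hence the ancilla hypothesis of \thref{lm1DLCU} is automatically met by borrowing a qubit from a neighboring dimension, which is precisely the content of the remark following \thref{thm3} that no dedicated ancilla is needed. I would spell this borrowing argument out explicitly, noting that the borrowed qubit is returned to its original state by the standard uncomputation in \cite{gidney2015constructing, barenco1995elementary}.

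Finally, since $n_{i}\leq n^{*}$ for every $i$, the bounds $O(n_{i})$ on gate count and depth are each dominated by $O(n^{*})$, and the constant-count gates ($Z$ and the two Hadamards) are unaffected by which dimension is being treated. Collecting these observations gives the claim for an arbitrary element $R^{i}_{\alpha_{i},j}$ of the decomposition \eqref{higher_dim_LCU}.

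I expect the main (and really only) obstacle to be the careful justification of the ancilla/borrowable-qubit point: one must check that the borrowed qubit from another register is genuinely in a basis state throughout the subroutine and is correctly restored, so that it may be reused, and that the bounds $n_{j}>2$ suffice to supply the single borrowable bit that Lemmas 7.2 and 7.3 of \cite{barenco1995elementary} and the construction of \cite{gidney2015constructing} require. Everything else is a routine transcription of \thref{lm1DLCU} through the tensor-product structure.
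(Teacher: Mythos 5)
Your proposal is correct and follows essentially the same route as the paper: reduce each term to its $1$-dimensional factor $R_{\alpha_{i},j}$ via the tensor-product structure, observe that the idle registers supply a borrowable ancilla so that \thref{lm1DLCU} applies with $n=n_{i}$, and bound $O(n_{i})$ by $O(n^{*})$. One small remark: your claim that the borrowed qubit is ``in a definite computational-basis state'' is neither needed nor generally true (the overall state may be entangled across registers), but this is harmless since the constructions of \cite{barenco1995elementary} and \cite{gidney2015constructing} accept dirty borrowed qubits in arbitrary states and restore them, which is exactly the property the paper relies on.
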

\begin{proof}
    Recall that \eqref{higher_dim_LCU} gives us the following LCU decomposition of $L^{d}_{\bold{n},\alpha}$:
    \begin{equation*}
       L^{d}_{\bold{n},\bold{\alpha}} = \sum_{i=1}^{d}\sum_{j=1}^{L_{\alpha_{i}}}\frac{c_{\alpha_{i},j}}{h^{2}_{\alpha_{i}}}I^{\otimes n_{d}}\otimes I^{\otimes n_{d-1}}\otimes \cdots \otimes I^{\otimes n_{i+1}}\otimes R_{\alpha_{i},j} \otimes I^{\otimes n_{i-1}}\otimes\cdots \otimes I^{\otimes n_{1}} \,. 
    \end{equation*}
    Hence, we are left to consider a general term of the form $I^{\otimes n_{d}}\otimes I^{\otimes n_{d-1}}\otimes \cdots \otimes I^{\otimes n_{i+1}}\otimes R_{\alpha_{i},j} \otimes I^{\otimes n_{i-1}}\otimes\cdots \otimes I^{\otimes n_{1}}$. Note that each such term is guaranteed to have a borrowable ancilla qubit available due to its tensor product structure (in fact, it will have $n_{1}+\cdots n_{i-1}+n_{i+1}+\cdots+n_{d})$). Therefore, the the proof of \thref{lm1DLCU} implies that the gate cost and worst case circuit depth of an $R^{i}_{\alpha_{i},j}$ term is at most that of $R_{\alpha_{i},j}$. Moreover, it was established in the proof of \thref{lm1DLCU} that these quantities are at most $O(n_{i})$ and the result follows. 
\end{proof}

\begin{table}[h]
    \centering
    \caption{LCU Resource Estimates Summary}
    \label{tab:sample_table}
    \begin{tabular}{|l|c|c|c|c|c|}
        \hline
        & \textbf{\# of Qubits} & \textbf{\# of Terms in LCU} & \textbf{\# of Ancilla} & \textbf{Gate Cost} & \textbf{Depth} \\
        \hline
        \textbf{$1$-dim. Dirichlet} & $n$ & $5$ & $1$& $O(n)$ & $O(n)$  \\
        \hline
        \textbf{$1$-dim. Neumann/Robin} & $n$ & $10$ & $1$ & $O(n)$ & $O(n)$  \\
        \hline 
        \textbf{$d$-dim. Mixed} &$\sum_{i=1}^{d}n_{i}$ & $10d$ & $0$ & $O(n^{*})$ & $O(n^{*})$  \\[1pt]
        \hline
    \end{tabular}
   \\[10pt]
   \caption*{The worst case gate cost (in terms of Toffoli and Clifford gates) and circuit depth of individual LCU terms is displayed. In the last row, $n^{*}= \max\{n_{1}, \dots, n_{d}\}$.}
\end{table}

\subsection{VQLS Application}

As a demonstration of potential application, we consider the costs of using our methods in conjunction with the VQLS algorithm. We begin with a simple estimate for the number of gates required to execute the Hadamard Test circuits associated with our decompositions and the local cost function proposed in \cite{bravo2023variational} and defined in equation \eqref{local_cost}.  

\begin{lemma}{(VQLS Cost)}\thlabel{lmVQLS}
    The gate cost of a Hadamard Test circuit for estimating the value of a $\beta_{ll'}$ or $\delta^{(j)}_{ll'}$ term corresponding to the proposed LCU of either $A_{n,1}, A_{n,2}$ or $A_{n,3}$ is at most $6$-Hadamard, $O(n)$ $X$, $CNOT$, and Toffoli gates, plus twice the gate costs of $V(\theta)$ and $U_{b}$, provided two ancilla qubits are available. 
\end{lemma}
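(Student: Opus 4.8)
The plan is to read off the gate content of the two Hadamard Test circuits of Figures~\ref{fig_1} and~\ref{fig_2} and then substitute the per-unitary bounds of \thref{lm1DLCU}. Recall the skeleton of each test: one Hadamard on the test ancilla at the start and one at the end; the uncontrolled state-preparation unitary $V(\theta)$; and, sandwiched between the two Hadamards, controlled copies of two of our LCU unitaries, say $R_l$ and $R_{l'}$ (one of them anti-controlled, i.e.\ conjugated by two $X$ gates on the test ancilla), with the test ancilla as control. For $\delta^{(j)}_{ll'}$ the block between the two controlled $R$-gates additionally carries a controlled $U_b^{\dagger}$, a controlled $Z_j$, and a controlled $U_b$, with $V(\theta)$ again applied uncontrolled. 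So in every case the circuit is built from: two Hadamards on the test ancilla; one uncontrolled $V(\theta)$; (for $\delta^{(j)}_{ll'}$) one controlled $U_b$ and one controlled $U_b^{\dagger}$; one controlled $Z_j$; the controlled versions of two of the unitaries in \eqref{DL3}, \eqref{periodic_LCU}, or \eqref{NR_LCU}; and $O(1)$ extra $X$ gates from the anti-control.

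The heart of the argument is bounding a \emph{controlled} LCU unitary. By \thref{lm1DLCU} each such $R$ is a product of $O(n)$ gates from $\{X, CNOT, \text{Toffoli}\}$, at most one $Z$, and at most two Hadamards, the Hadamards entering only through the $C_n^-$ factor in the form $C_{n-1}(Z) = (I^{\otimes n-1}\otimes H)\,T_{n-1}\,(I^{\otimes n-1}\otimes H)$, while the increment gates $S_n^{\pm 1}$ contribute none. Prepending a single control keeps the circuit inside the Clifford-plus-Toffoli family at only $O(1)$ blow-up per gate: $X$ becomes $CNOT$, $CNOT$ becomes Toffoli, and a doubly-controlled $CNOT$ (a $3$-control $NOT$) reduces to $O(1)$ Toffolis once a borrowable qubit is present --- which is exactly what the second ancilla supplies, reused (and restored) for the increment construction of \cite{gidney2015constructing} and for each nested multicontrol $NOT$ of \cite{barenco1995elementary} in turn. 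Crucially, the controlled $C_n^-$ block is produced not by naively controlling each of its gates --- which would demand a controlled Hadamard --- but by re-expressing the (one larger) multicontrol $Z$ it contains as $(I^{\otimes n-1}\otimes H)\,(\text{multicontrol }NOT)\,(I^{\otimes n-1}\otimes H)$, which still uses only the same two uncontrolled Hadamards; the stray $Z$'s and the controlled $Z_j$ are absorbed into the $O(n)$ Clifford-plus-Toffoli count (for instance by computing the control conjunction into the workspace ancilla and applying a single controlled-$Z$). Summing: two Hadamards on the test ancilla, plus at most two per controlled $R$-block, gives at most six Hadamards; the $X$, $CNOT$, Toffoli count stays $O(n)$; and the contributions of $V(\theta)$, $U_b$, and $U_b^{\dagger}$ --- including the overhead of controlling $U_b^{\pm 1}$ --- are bounded by twice the gate cost of $V(\theta)$ plus twice that of $U_b$, the factor of two covering both the control overhead and the appearance of $U_b$ together with its adjoint in the $\delta$ circuit.

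The step I expect to be the main obstacle is precisely this Hadamard bookkeeping under ``add a control.'' One must check, case by case across the five terms of \eqref{DL3} and the ten of \eqref{NR_LCU} --- in particular the Neumann/Robin terms carrying both a $C_n^-$ factor and a standalone $Z$, and the controlled $Z_j$ in the $\delta$ circuit --- that no controlled LCU unitary ever forces a genuine controlled Hadamard or needs more than two Hadamards, routing every residual $\pm 1$ phase and every $Z$ onto the workspace ancilla or into the multicontrol-$Z$ realization rather than into new Hadamards. One must also verify that the single workspace ancilla simultaneously suffices as the borrowable bit for the increment-gate decomposition and for every nested multicontrol $NOT$ occurring in the same circuit. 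The remaining steps --- tallying the handful of $X$ gates from the anti-control and confirming the $O(n)$ bound is uniform over the cases --- are routine once \thref{lm1DLCU} is invoked.
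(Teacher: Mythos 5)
Your proposal follows essentially the same route as the paper's proof: bound each controlled LCU term by controlling the $O(n)$ Clifford-plus-Toffoli decomposition of $S_{n}$ (using the borrowable ancilla to reduce the resulting $3$-control $NOT$s to Toffolis) and by absorbing the control on $C_{n}^{-}$ into an enlarged multicontrol $Z$ still sandwiched between the same two uncontrolled Hadamards, then tally the Hadamards and charge $V(\theta)$ and $U_{b}$ separately. The only cosmetic difference is the bookkeeping of the controlled $Z_{j}$ (the paper factors it into $2$ Hadamards and a $CNOT$ rather than absorbing it into the Clifford count), but both tallies land on the stated bound, so the argument is correct and matches the paper.
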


\begin{proof}
    To begin, let $R_{1,l}$ be an LCU decomposition term associated with the Dirichlet Laplacian $A_{n,1}$. Consider the controlled application of such an $R_{1,l}$ term, which is required to estimate the values of $\beta_{ll'}$ and $\delta^{(j)}_{ll'}$, defined in \eqref{beta_def} and \eqref{delta_def}, using the Hadamard Test circuits shown in Figures \ref{fig_1} and \ref{fig_2}. Note that we set aside one ancilla qubit to be used in the usual way for the Hadamard test and reserve the other to aid in the decomposition of $R_{1,l}$ terms. It follows from equation \eqref{DL3} that providing an upper bound on the cost of implementing a controlled $S_{n}$ or $C_{n}^{-}$ gate will be sufficient to establish the desired estimate for $A_{n,1}$.  
    
    Recall that an $S_{n}$ gate can be constructed using $O(n)$ $X$, $CNOT$, and Toffoli gates in our setting since an ancilla qubit is available \cite{gidney2015constructing}. Therefore, a controlled $S_{n}$ gate will require at most $O(n)$ $CNOT$, Toffoli, and $CNOT(C, t)$ gates, where $|C| =3$. Note that a multicontrol $NOT$ gate with $3$ controls can be written as a product of $4$-Toffoli gates given, as is the case here, access to a borrowable ancilla qubit. On the other hand, a $C_{n}^{-}$ gate can be represented as a product of $3$-$X$, $2(n-1)$-$CNOT$, and one $C_{n-1}(Z)$ gates by \eqref{basic_decomp_1}. Upon adding the additional control, and factoring the resultant multicontrol $Z$ gate into $2$-Hadamard gates and a single $CNOT(C,n)$ gate with $|C|=n-1$, we find that the gate cost associated with $C_{n}^{-}$ is bounded above by $2$-$H$ and $O(n)$ $X$, $CNOT$, and Toffoli gates. Note that we have relied on \cite{barenco1995elementary} once again to estimate the cost of the $CNOT(C, n)$ term (see the proof of \thref{lm1DLCU} above). In either case, we have established that a controlled $R_{1,l}$ gate associated with $A_{n,1}$ requires at most  $2$-$H$ and $O(n)$ $X$, $CNOT$, and Toffoli gates. Moreover, an analogous argument will show that this estimate remains valid for $R_{3,l}$ belonging to the LCU decomposition of $A_{n,3}$ (an additional $2$-Hadamard gates may be required). We omit the details here for brevity. Observe also that each unitary $R_{2,l}$ of the periodic LCU given in \eqref{periodic_LCU} is equal to some $R_{1,l}$ term, and the estimate for the Dirichlet cost remains valid in the periodic case. 
    
    In summary, a controlled $R_{\alpha_{i},l}$ gate, with $\alpha_{i} \in \{1,2,3\}$, can be implemented in this setting using at most $4$-$H$ and $O(n)$ $X$, $CNOT$, and Toffoli gates. Recall the Hadamard test circuits depicted in Figure \ref{fig_1} and Figure \ref{fig_2} and note that the corresponding gate cost will be at most that of a controlled $R_{\alpha_{i},l}$, a controlled $R_{\alpha_{i},l'}^{\dagger}$, a single controlled $Z$, a $V(\theta)$, and two $U_{b}$ gates. The result follows directly from factoring the controlled $Z$ gate into $2$ Hadamard gates and a $CNOT$ and applying the estimate just derived for the gate cost of a controlled $R_{\alpha_{i},l}$ gate, which remains valid for a controlled $R^{\dagger}_{\alpha_{i},l}$ gate.  
\end{proof}

\thref{lmVQLS} can be extended in a straightforward manner to cover the higher-dimensional operators $L^{d}_{\bold{n},\alpha}$. In particular, a naive analysis of the gate cost and circuit depth for a Hadamard Test circuit in the $d$-dimensional case provides an upper bound on the order $O(n^{*})$, where $n^{*} = \max\{n_{1}, \dots, n_{d}\}$. However, it is important to highlight that many additional borrowable qubits are available in this case and sharper estimates are surely obtainable. 

\subsection{Comparison with Existing Methods}

In this subsection, we demonstrate the advantage of utilizing our decomposition within the VQLS framework over other existing methods. For concreteness, we limit our analysis to $1$-dimensional (inhomogeneous) Dirichlet problems on the unit interval. Access to an ancilla qubit to aid in circuit implementation is assumed throughout. We find that our representation of $A_{n,1}$ given in \eqref{basic_decomp_1} requires at least $n^{2}$ fewer Toffoli gate executions per iteration of the VQLS algorithm (using a local cost function) when compared to the leading Dirichlet LCU found in the literature \cite{kharazi2024explicit}, provided $n$ is sufficiently large and the well-known decomposition of $T_{k}$ and $S_{k}$ gates found in \cite{barenco1995elementary} and \cite{gidney2015constructing}, respectively, are adopted. This reduction in cost can become quite significant as $n$ increases, especially when one considers that the number of iterations required for successful optimization is expected to increase with $n$.

The Poisson problem is fundamental and has naturally attracted the attention of researchers interested in LCU based quantum differential equation solvers. A $3$ term LCU decomposition analogous to \eqref{periodic_LCU} was introduced in \cite{childs2021high}. These authors use a Quantum Fourier Transform technique to tackle both periodic and homogeneous boundary conditions. They also developed pseudo-spectral methods to treat inhomogeneous conditions, but this involves oracles for the boundary values and a separation of the operator according to interior and boundary points; this method is fundamentally different from ours and does not seem to yield an explicit LCU for direct comparison. Around the same time, a different research group proposed a separate $3$ term LCU decomposition for $A_{n,2}$ was proposed in \cite{sato2021variational}. The authors extended this construction to produce a $4$ term decomposition for $A_{n,1}$, but their representation includes non-unitary operators (and is not technically in the LCU form). The authors of \cite{bae2024hardware} modified this approach to avoid implementation of $S_{n}$ gates and obtained a decomposition that includes $n$ non-unitary terms. The non-unitary terms of \cite{sato2021variational,bae2024hardware} are a nontrivial technical obstacle. Nevertheless, these authors proposed a method to calculate the required expectation values specifically tailored to an appropriate global cost function. However, this is a serious restriction since global cost functions are known to be more susceptible to barren plateaus than local cost functions \cite{bravo2023variational}. The non-unitary decompositions described in the articles referenced above might lend themselves to optimization via local cost functions using a block-encoding strategy along the lines of \cite{gnanasekaran2024efficient, gnanasekaran2025efficient}, but a detailed investigation and comparison with such an approach is outside the scope of this paper. Our interest in explicit LCU decompositions for inhomogeneous Dirichlet problems and optimization via local cost functions motivates us to look elsewhere for an in-depth comparison. 

Next, we compare our method with the one developed in \cite{kharazi2024explicit}. We note that those authors created their decompositions with the intention of block-encoding the Laplacian, rather than combining it with the VQLS algorithm. Nevertheless, they also obtain LCU decompositions for Poisson problems and we believe a comparison is warranted. Their LCU for the $1$-dimensional Dirichlet case can be written in the following form: 
\begin{equation}\label{explicit_khar}
    A_{n,1} = 2I^{\otimes n}+\frac{1}{2}S_{n}+\frac{1}{2}S^{\dagger}_{n}+ \frac{1}{2}S_{n}\cdot C_{n}(Z)+\frac{1}{2}C_{n}(Z)\cdot S_{n}^{\dagger} \, . 
\end{equation}
Note that equations \eqref{DL3} and \eqref{explicit_khar} have the same number of terms (we are unaware of a Dirichlet LCU with fewer than $5$ terms). Therefore, we ignore the costs of $U_{b}$, $V(\theta)$, Hadamard, and controlled $Z$ gates incurred from the Hadamard Test in the following analysis, since they are expected contribute equally to the VQLS resource requirements of either decomposition strategy. Moreover, we assume that all circuits are real-valued and exploit the computational reductions described in Section \ref{vqls_subsec}.

For convenience, let $|U|$ denote the gate cost of implementing a generic quantum gate $U$. This notion will be made more precise as we narrow the analysis. We consider an $n+2$ qubit system where the $0\textsuperscript{th}$ qubit will be used as the ancilla to be measured in the Hadamard Test circuits and the $(n+1)\textsuperscript{th}$ qubit will be used as the ancilla that aids in our gate decompositions. Also, let $C(0; S_{n})$ denote a controlled $S_{n}$ gate with the $0\textsuperscript{th}$ qubit controlled. The cost of executing the Hadamard Test circuits as depicted in Figure \ref{fig_2} for $\delta^{(j)}_{ll'}$, for all $l,l' \in \{1, \dots, 5\}$ and a fixed $j \in \{1,\dots, n\}$, with the LCU presented in \cite{kharazi2024explicit} is 
\begin{equation}\label{K_cost}
    G_{1} = 24|C(0;S_{n})|+12|C_{n+1}(Z)| \, ,  
\end{equation}
which can be directly computed using \eqref{explicit_khar} and the definition of $\delta^{(j)}_{ll'}$ from \eqref{delta_def}. On the other hand, the analogous cost for each such term in our decomposition is 
\begin{equation}\label{M_cost}
    G_{2} = 12|C(0;S_{n})|+6(n+3)|CNOT| + 12(n-1)|T_{3}| +6|C_{n}(Z)|\, , 
\end{equation}
which is found in a similar fashion by using \eqref{delta_def} and \eqref{DL3}. There are many approaches to decompose a multicontrol $Z$ gate, and no specific choice seems to be made in \cite{kharazi2024explicit} (they do assume an implementation of such gates with $O(n)$ Toffoli cost). We are left to choose an implementation technique with an $O(n)$ Toffoli cost to begin comparing \eqref{K_cost} and \eqref{M_cost}. It follows from Lemma 7.4 of \cite{barenco1995elementary} that a $C_{n+1}(Z)$ gate can be realized as the product of $2$-Hadamard and $8(n-4)$-Toffoli gates with access to an ancilla qubit. Likewise, $C_{n}(Z)$ can be implemented with $2$-Hadamard and $8(n-5)$-Toffoli gates by the same method. We adopt these representations to estimate the gate costs of multicontrol $Z$ gates. Subtracting \eqref{M_cost} from \eqref{K_cost} then yields the excess gate count incurred by \eqref{explicit_khar} relative to \eqref{DL3} for $\delta^{(j)}_{ll'}$, with $j \in \{1, \dots, n\}$: 
\begin{equation} \label{C_D}
    G_{1}-G_{2} =12 |C(0;S_{n})| + 12(3n-11)|T_{3}| -6(n+3)|CNOT| +12|H|  \, . 
\end{equation}
In order to make sense of \eqref{C_D}, we must give a more concrete interpretation of  $|T_{3}|$ and $|C(0;S_{n})|$. To this end, we make the assumption that $|T_{3}| \geq |CNOT|$, which is reasonable since a $T_{3}$ gate requires at least $6$-$CNOT$ gates to implement if decomposed into $CNOT$ and one qubit gates \cite{shende2008cnot}. Additionally, an $S_{n}$ gate will be implemented here according to \cite{gidney2015constructing}, because this approach is taken in \cite{kharazi2024explicit}. Consequently, it will be assumed that $|S_{n}|\geq n|T_{3}|$. Moreover, as noted in \cite{gidney2015constructing}, $|C(0; S_{n})| = |X| + |S_{n+1}| \geq (n+1)|T_{3}|$. 

It follows from \eqref{C_D} and the reasoning at the end of the preceding paragraph that 
\begin{equation*}
    G_{1}-G_{2} \geq n|T_{3}| \, ,
\end{equation*}
whenever $n\geq 4$. Recall that this excess $T_{3}$ count is incurred for each collection of Hadamard test circuits corresponding to a fixed $j \in \{1, \dots, n\}$. Hence, the LCU presented in \cite{kharazi2024explicit} requires at least $n^{2}$ additional Toffoli gates per iteration of the VQLS algorithm to calculate $\{\delta^{(j)}_{ll'}\}_{j=1}^{n}$. Moreover, this is enough to establish our claim since the constant number of Hadamard Test circuits corresponding to the terms omitted from the preceding analysis, namely the $\beta_{ll'}$ and $\delta^{(n+1)}_{ll'}$ terms, require at most $O(n)$ Toffoli, $CNOT$, and Pauli gates.

\section{Proofs of the main results} \label{proof_sect}

Most of the work required to prove our main results has been presented in the previous sections. Section \ref{Decomp_sect} establishes our LCU decompositions and resource estimates were provided in Section \ref{complexity_sect}. It remains to tie these results together. We begin with the proof of \thref{thm1}. 

\begin{proof}{(\thref{thm1})}
    An LCU decomposition for $A_{n,1}$ in form of equation \eqref{thm1_eq} follows from equation \eqref{DL3}, which was justified by the work in Section \ref{dirichlet_decomp_fund}. It is easy to see that each term in the decomposition is unitary. For concreteness, we label $c_{1,l}$ and $R_{1,l}$ according to \eqref{dirichlet_decomp_fund} in order of appearance so that, for example, $c_{1,1}=-2$, $R_{1,l} = I^{\otimes n}$, $c_{1,2} =1$, $R_{1,2} = S_{n}$, and so on. \thref{lm1DLCU} establishes the desired gate cost estimates for the LCU terms. 
\end{proof}
The Neumann/Robin case is handled next. 
\begin{proof}{(\thref{thm2})}
    Equation \eqref{NR_LCU} provides an LCU decomposition for $A_{n,3}$ in the form of \eqref{thm2_eq}. Note that \eqref{NR_LCU} contains at most $10$ terms, which are all evidently unitary, although some of the coefficients may vanish depending on the problem-specific values of $a_{0}$, $a_{1}$, and $h_{3}$. We label $c_{3,l}$ and $R_{3,l}$ in order of appearance in equation \eqref{NR_LCU}. The claimed gate cost estimates are validated by \thref{lm1DLCU}.
\end{proof}
The higher-dimensional result is handled analogously. 
\begin{proof}{(\thref{thm3})}
    An appropriate LCU decomposition is obtained from the second line of \eqref{higher_dim_LCU}. Note that we take the choice of labeling for $c^{i}_{\alpha_{i},j}$ and $R^{i}_{\alpha_{i},j}$ to be consistent with those introduced in the proofs for \thref{thm1} and \thref{thm2}, and that each term of the proposed decomposition is clearly unitary. Gate cost estimates follow directly from \thref{HD_lem}. 
\end{proof}

\section{Extensions}\label{ext_sect}
The decompositions presented in Section \ref{Decomp_sect} are restricted to problems whose corresponding differential operator is simply the Laplacian. In this section, we demonstrate efficient LCU decompositions for some differential operators that include first-order derivatives. We restrict our attention to $1$-dimensional Dirichlet problems, but our methods could be extended in a straightforward manner to higher-dimensions and alternate boundary conditions. 

\subsection{Constant Coefficient First-Order Terms}
Consider the following problem: 
\begin{empheq}[left=\empheqlbrace]{align}
\label{basic_grad}
\begin{split}
         \frac{d^{2}}{dx^{2}}u+p \frac{d}{dx}u&=f \qquad   \;\;\text{in} \qquad  (0,1)\\  
                                               u(0)& = a \\ 
                                               u(1)&=b \, . 
\end{split}
\end{empheq}
Let us adopt the notation used in Section \ref{1d_Dir_sub_sect}. We can approximate $\frac{d}{dx}$ using a second-order accurate central difference scheme:
\begin{equation}\label{cont_grad}
    D_{n}\bold{u} \coloneqq \frac{1}{2h_{1}}\begin{pmatrix}
            0 & 1  & 0 & \cdots &  & 0 \\
            -1  & 0 & 1 & \ddots &  & \vdots \\ 
            0  & -1 & 0 & 1 &  &  \\
        \vdots &    & \ddots  & \ddots   &  \ddots &  0 \\
              &       &   \ddots &  -1 & 0 &  1 \\
            0  &       &  \cdots  &   0& -1 & 0 \\
            \end{pmatrix}
            \bold{u} 
            = \bold{u}'(\bold{x})+  \frac{1}{2h_{1}}
            \begin{pmatrix}
             -a \\
             0 \\
              \vdots \\
               \\
               0\\
              b \\
            \end{pmatrix} + O(h_{1}^{2}) \, . 
\end{equation}
Note that the matrix $D_{n}$ in \eqref{cont_grad} is simply expressed as 
\begin{equation}\label{basic_grad_exp}
    D_{n} = \frac{1}{2h_{1}}\left(S_{n}^{\dagger} - S_{n} +\left(I^{\otimes n-1}\otimes Z\right) \cdot C_{n}\right)\, , 
\end{equation}
where $C_{n}$ is defined in \eqref{Cpm}. Equations \eqref{DL3} and \eqref{cont_grad} can be combined to obtain the following second-order accurate approximation of \eqref{basic_grad}:
\begin{equation} \label{L_alpha_Dir}
    \left(L_{n,1}+p D_{n}\right)\bold{u} = \bold{f} + \tilde{B}_{1} \, , 
\end{equation}
where 
\begin{equation*}
    \tilde{B}_{1} = \left(\frac{a(ph_{1}-2)}{2h_{1}^{2}}, 0, \dots, 0,\frac{-b(ph_{1}+2)}{2h_{1}^{2}}\right)^{T}
\end{equation*}
Equation \eqref{L_alpha_Dir} can be normalized by a factor of $h_{1}^{2}$ to yield
\begin{equation}\label{L_alpha_dirch_2}
     \left(A_{n,1}+ph_{1}^{2} D_{n}\right)\bold{u} = h_{1}^{2}\bold{f} + h_{1}^{2}\tilde{B}_{1} \, , 
\end{equation}
and the matrix on the left hand side of \eqref{L_alpha_dirch_2} can be expanded using \eqref{Cnd}, \eqref{DL3}, and \eqref{basic_grad_exp} to read 
\begin{align}\label{intm_1}
\begin{split}
   A_{n,1}+ph_{1}^{2} D_{n}= &-2I^{\otimes n}+\frac{(2-p h_{1})}{2}S_{n}+\frac{(2+p h_{1})}{2}S_{n}^{\dagger}- \frac{1}{2}X^{\otimes n}-\frac{1}{2}C^{-}_{n} \\ 
    &+\frac{h_{1}p}{4}\left(I^{\otimes n-1}\otimes Z\right)\cdot X^{\otimes n} +\frac{h_{1}p}{4}\left(I^{\otimes n-1}\otimes Z\right)\cdot C^{-}_{n}\, . 
\end{split}
\end{align}
Hence, \eqref{intm_1} is an LCU for a second-order accurate numerical scheme that encodes a Poisson problem with a constant coefficient first order derivative term. The following theorem summarizes our findings. 

\begin{theorem} \thlabel{thm4}
    The matrix $A_{n,1}+ph_{1}D_{n}$ admits a decomposition of the form
    \begin{equation*} \label{thm4_eq}
        A_{n,1} +ph^{2}_{1}D_{n}= \sum_{l=1}^{7}c_{4,l}R_{4,l}\, , 
    \end{equation*}
    where $c_{4,l} \in \mathbb{R}$ and each $R_{4,l}$ is unitary. Moreover, if a single ancilla qubit is available, then a given $R_{4,l}$ may be implemented using at most a single $Z$ gate, $2$-Hadamard gates and $O(n)$ Toffoli, $X$, and $CNOT$ gates.
\end{theorem}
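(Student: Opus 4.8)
The plan is to verify that the right-hand side of \eqref{intm_1} furnishes exactly the claimed decomposition and then invoke the already-established resource estimates. First I would justify that \eqref{intm_1} is a valid identity: starting from \eqref{L_alpha_dirch_2}, substitute the Dirichlet decomposition \eqref{DL3} for $A_{n,1}$ and the explicit form \eqref{basic_grad_exp} for $D_{n}$, noting that the normalizing factor $h_{1}^{2}$ cancels the $\frac{1}{2h_{1}}$ in \eqref{basic_grad_exp} up to the stated coefficient $\frac{h_{1}p}{2}$. The terms $S_{n}$ and $S_{n}^{\dagger}$ appearing in both \eqref{DL3} and in $ph_{1}^{2}D_{n}$ combine to give the coefficients $\frac{2\mp ph_{1}}{2}$, while the term $\left(I^{\otimes n-1}\otimes Z\right)\cdot C_{n}$ is expanded using \eqref{Cnd}, i.e. $C_{n}=\frac{1}{2}C_{n}^{-}+\frac{1}{2}X^{\otimes n}$, producing the two new terms with coefficient $\frac{h_{1}p}{4}$. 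Collecting like terms ($I^{\otimes n}$, $X^{\otimes n}$, $C_{n}^{-}$, and the $C_{n}(Z)$-style product from \eqref{DL3}) yields precisely seven summands, which I would enumerate and relabel as $c_{4,l}R_{4,l}$ for $l=1,\dots,7$.

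Next I would confirm that each $R_{4,l}$ is unitary. Five of the seven operators ($I^{\otimes n}$, $S_{n}$, $S_{n}^{\dagger}$, $X^{\otimes n}$, and the $C_{n}^{-}$-term realized via \eqref{basic_decomp_1}) already appear in \eqref{DL3} and were shown to be unitary in the proof of \thref{thm1}. The two genuinely new operators are $\left(I^{\otimes n-1}\otimes Z\right)\cdot X^{\otimes n}$ and $\left(I^{\otimes n-1}\otimes Z\right)\cdot C_{n}^{-}$; each is a product of unitaries (a tensor product of Paulis, respectively $X^{\otimes n}$ composed with the unitary $I^{\otimes n-1}\otimes Z$, and $C_{n}^{-}$ composed with $I^{\otimes n-1}\otimes Z$), hence unitary. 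One should also absorb the real scalar $\frac{h_{1}p}{4}$ (and the scalars $\tfrac{2\pm ph_{1}}{2}$) into the coefficients $c_{4,l}\in\mathbb{R}$ rather than the $R_{4,l}$, which is immediate.

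Finally, for the gate-cost claim I would argue that every $R_{4,l}$ is, up to a single $Z$ gate, one of the building blocks already analyzed in \thref{lm1DLCU}. Indeed $\left(I^{\otimes n-1}\otimes Z\right)\cdot X^{\otimes n}$ costs $n$ $X$-gates plus one $Z$-gate, and $\left(I^{\otimes n-1}\otimes Z\right)\cdot C_{n}^{-}$ costs one $Z$-gate composed with the circuit for $C_{n}^{-}$ from \eqref{basic_decomp_1}, which by \thref{lm1DLCU} uses $2$ Hadamard gates and $O(n)$ Toffoli, $X$, and $CNOT$ gates (the Hadamards coming from writing $C_{n-1}(Z)=\left(I^{\otimes n-1}\otimes H\right)\cdot(I\otimes T_{n-1})\cdot\left(I^{\otimes n-1}\otimes H\right)$ and the $CNOT(C,t)$ decomposition of \cite{barenco1995elementary}); the $S_{n}$, $S_{n}^{\dagger}$ terms use $O(n)$ Toffoli, $X$, $CNOT$ gates via \cite{gidney2015constructing} with the single ancilla. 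Thus each $R_{4,l}$ meets the stated bound.

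The only mild subtlety — and the place where I would be most careful — is the bookkeeping when merging \eqref{DL3} with $ph_{1}^{2}D_{n}$: one must check that $S_{n}$ and $S_{n}^{\dagger}$ do not collide with the new $\left(I^{\otimes n-1}\otimes Z\right)$-terms (they do not, since the latter are not shift operators), that the $\tfrac12 X^{\otimes n}$ contributions from \eqref{Cnd} and from \eqref{DL3} are tracked with the correct signs, and that no spurious eighth term appears. Since \eqref{intm_1} is already displayed in the excerpt and can be taken as given, this step reduces to a short verification rather than a genuine obstacle, and the theorem follows.
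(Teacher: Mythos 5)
Your proposal is correct and follows essentially the same route as the paper: the paper's proof simply observes that \eqref{intm_1} already exhibits the seven unitary terms and cites \thref{lm1DLCU} for the gate costs, while you additionally spell out the derivation of \eqref{intm_1} from \eqref{DL3}, \eqref{basic_grad_exp}, and \eqref{Cnd} and verify unitarity and the extra single-$Z$ cost of the two new terms. Your bookkeeping of the coefficients $\tfrac{2\mp ph_{1}}{2}$ and $\tfrac{h_{1}p}{4}$ matches the paper's equation exactly, so there is no gap.
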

\begin{proof}
    Equation \eqref{intm_1} provides an LCU decomposition of $A_{n,1} +ph_{1}D_{n}$ with $7$ terms. The claimed complexity of implementing a single term from this decomposition follows immediately from the proof of \thref{lm1DLCU}. 
\end{proof}
\subsection{Polynomial Coefficients}\label{PC_sub} 
In this subsection, we present a method to derive LCU decompositions for Poisson problems augmented with polynomial-coefficient first-order derivative terms. In particular, suppose that $p(x) = a_{0}+a_{1}x+\cdots +a_{k}x^{k}$ and consider the following problem:
\begin{empheq}[left=\empheqlbrace]{align}
\label{poly_grad}
\begin{split}
    \frac{d^{2}}{dx^{2}}u+p(x) \frac{d}{dx}u&=f \qquad   \;\;\text{in} \qquad  (0,1)\\  
                    u(0)& = a \\ 
                    u(1)&=b \, . 
\end{split}
\end{empheq}
We will build off the work of the previous subsection to derive a discretization for \eqref{poly_grad}. Recall that $x_{i} = ih_{1}$ and let
\begin{equation*}
  M(p(x)) \coloneqq  \begin{pmatrix}
        p(x_{1}) &  &  &  & \\
                  & p(x_{2}) & & & \\
                  & & &\ddots & \\ 
                  & & & & p(x_{2^{n}})
    \end{pmatrix} \, . 
\end{equation*}
It can be shown that 
\begin{equation}\label{FD_poiss_var}
    \left(L_{n,1}+M(p(x))D_{n}\right)\bold{u} = \bold{f} +\tilde{B}_{1} \, , 
\end{equation}
where we have introduced the following slight abuse of notation
\begin{equation*}
    \tilde{B}_{1} = \left(\frac{a(p(x_{1})h_{1}-2)}{2h_{1}^{2}}, 0, \dots, 0,\frac{-b(p(x_{2^{n}})h_{1}+2)}{2h_{1}^{2}}\right)^{T} \, , 
\end{equation*}
is a second-order accurate discretization of \eqref{poly_grad}.  Equation \eqref{FD_poiss_var} can be normalized by a factor of $h_{1}^{2}$ to get
\begin{equation}\label{L_p_dirch_2}
     \left(A_{n,1}+h_{1}^{2} M(p(x))D_{n}\right)\bold{u} = h_{1}^{2}\bold{f} + h_{1}^{2}\tilde{B}_{1} \, . 
\end{equation}

We are left now to decompose $M(p(x))$. To this end, let $c_{j} = -\frac{2^{n}}{2^{n}+1}2^{-j}$, $Z_{j} = I^{\otimes j-1}\otimes Z \otimes I^{\otimes n-j}$, and $\mathcal{A}_{k} = \{ \{i_{1}, \dots, i_{k}\} \; \; | \; \; 1\leq i_{1}<i_{2} < \cdots < i_{k}\leq n\}$. For $\alpha = \{i_{1}, \dots, i_{m}\} \in \mathcal{A}_{m}$, we introduce the definitions  
\begin{align*}
    \begin{split}
        c_{\alpha} &\coloneq c_{i_{1}}\cdots c_{i_{m}} \\
        Z_{\alpha} &\coloneq Z_{i_{1}}\cdots Z_{i_{m}} \, . 
    \end{split}
\end{align*}
It is shown in Appendix \ref{Append_formula_poly_dec} that this multi-index notation can be used to write 
\begin{align} \label{MP_final_it}
    \begin{split}
        M(p(x)) &= \left(a_{0}+\sum_{j=1}^{k}\frac{a_{j}c_{0}(j)}{2^{j}}\right) I^{\otimes n} + \sum_{i=1}^{n}\left(\sum_{j=1}^{k}\frac{a_{j}c_{i}(j)}{2^{j}}\right)c_{i}Z_{i} + \cdots \\ 
        &+ \sum_{\alpha \in \mathcal{A}_{r}}\left( \sum_{j=r}^{k}\frac{a_{j}c_{\alpha}(j)}{2^{j}}\right)c_{\alpha}Z_{\alpha} + \cdots + \sum_{\alpha \in \mathcal{A}_{k}}\frac{a_{k}c_{\alpha}(k)}{2^{k}}c_{\alpha}Z_{\alpha} \, . 
    \end{split}
\end{align}
See \eqref{Z_alpha_coeffs} and \eqref{rr_px} for a recursive relation defining the coefficients $c_{\alpha}(j)$, or Appendix \ref{Appendix_C} to view the explicit values corresponding to the cubic case. Equation \eqref{MP_final_it} contains $\sum_{j=0}^{k} {n \choose j}$ unitary terms of the form $Z_{\alpha}$, where $|\alpha| \in \{0,\dots, k\}$. It is well known that no closed-form formula for $\sum_{j=0}^{k} {n \choose j}$ exists, but estimates for some for some ranges of $k$ and $n$ are easily derived. For example, it can be shown that 
\begin{equation} \label{bound_on_poly}
    \sum_{j=0}^{k} {n \choose j} \leq \left(\frac{ne}{k}\right)^{k}  
\end{equation}
whenever $k<n/2$. Hence, equation \eqref{MP_final_it} contains $O(n^{k})$ $Z_{\alpha}$ terms for fixed $k$. 

It follows that $A_{n,1}+h_{1}^{2} M(p(x))D_{n}$ can be decomposed into $O(n^{k})$ unitary terms. We record our findings in the following theorem. 
\begin{theorem} \thlabel{thm5}
    Let $k <\frac{n}{2}$ and $p(x) =a_{0}+a_{1}x+\cdots +a_{k}x^{k}$. Then, the matrix $A_{n,1}+M(p(x))h^{2}_{1}D_{n}$ admits a decomposition of the form
    \begin{equation} \label{thm5_eq}
        A_{n,1} +M(p(x))h_{1}D_{n}= \sum_{l=1}^{L}c_{5,l}R_{5,l}\, , 
    \end{equation}
    where $L$ is $O(n^{k})$, $c_{5,l} \in \mathbb{R}$ and each $R_{5,l}$ is unitary. Moreover, if a single ancilla qubit is available, then a given $R_{5,l}$ may be implemented using at most $2$-Hadamard gates, $(k+1)$-$Z$ gates, and $O(n)$ Toffoli, $X$, and $CNOT$ gates.
\end{theorem}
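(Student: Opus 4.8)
The plan is to combine the LCU decomposition of $A_{n,1}$ from \thref{thm1} (equation \eqref{DL3}) with the decomposition of $M(p(x))$ given in \eqref{MP_final_it} and the decomposition of $D_{n}$ from \eqref{basic_grad_exp}, and then bound the total number of unitary terms and the per-term gate cost. First I would expand the product $M(p(x))h_{1}^{2}D_{n}$ using \eqref{basic_grad_exp}, which writes $D_{n}$ as a linear combination of $S_{n}^{\dagger}$, $S_{n}$, and $(I^{\otimes n-1}\otimes Z)\cdot C_{n}$; applying \eqref{Cnd} to replace $C_{n}$ by $\tfrac12 C_{n}^{-}+\tfrac12 X^{\otimes n}$ expresses $h_{1}^{2}D_{n}$ as a fixed linear combination of the four unitaries $S_{n}$, $S_{n}^{\dagger}$, $(I^{\otimes n-1}\otimes Z)\cdot C_{n}^{-}$, and $(I^{\otimes n-1}\otimes Z)\cdot X^{\otimes n}$.

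Next I would multiply this on the left by $M(p(x))$ as decomposed in \eqref{MP_final_it}, i.e.\ as a linear combination of the $\sum_{j=0}^{k}\binom{n}{j}$ operators $Z_{\alpha}$ with $|\alpha|\le k$. Each product $Z_{\alpha}\cdot S_{n}^{\pm 1}$, $Z_{\alpha}\cdot(I^{\otimes n-1}\otimes Z)\cdot C_{n}^{-}$, and $Z_{\alpha}\cdot(I^{\otimes n-1}\otimes Z)\cdot X^{\otimes n}$ is again unitary, so $M(p(x))h_{1}^{2}D_{n}$ is a linear combination of at most $4\sum_{j=0}^{k}\binom{n}{j}$ unitaries; adding the $5$ terms of $A_{n,1}$ from \eqref{DL3} keeps the count at $O\!\left(\sum_{j=0}^{k}\binom{n}{j}\right)$, which by the bound \eqref{bound_on_poly} is $O(n^{k})$ when $k<n/2$. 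This establishes the decomposition \eqref{thm5_eq} with $L$ of order $n^{k}$ and real coefficients $c_{5,l}$, after relabeling.

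For the per-term cost I would argue as in \thref{lm1DLCU}. The $A_{n,1}$ terms are handled there directly. A generic term from the $M(p(x))h_{1}^{2}D_{n}$ part has the form $Z_{\alpha}$ times one of $S_{n}^{\pm 1}$, $(I^{\otimes n-1}\otimes Z)\cdot C_{n}^{-}$, or $(I^{\otimes n-1}\otimes Z)\cdot X^{\otimes n}$; the factor $Z_{\alpha}$ with $|\alpha|\le k$ contributes at most $k$ single-qubit $Z$ gates, the leftover $I^{\otimes n-1}\otimes Z$ contributes one more $Z$ gate (so at most $k+1$ in total), $S_{n}^{\pm 1}$ costs $O(n)$ Toffoli, $X$, and $CNOT$ gates via \cite{gidney2015constructing} using the ancilla, $C_{n}^{-}$ costs $2$ Hadamards and $O(n)$ Toffoli/$X$/$CNOT$ gates via \eqref{basic_decomp_1} and the $C_{n-1}(Z)$ estimate of \thref{lm1DLCU}, and $X^{\otimes n}$ costs $n$ $X$ gates. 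Summing gives at most $2$ Hadamard gates, $(k+1)$ $Z$ gates, and $O(n)$ Toffoli, $X$, and $CNOT$ gates per term, as claimed.

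The main obstacle is the bookkeeping in the first two paragraphs: one must verify that \eqref{MP_final_it} truly exhibits $M(p(x))$ as a genuine linear combination of the unitaries $Z_{\alpha}$ (with the coefficients $c_{\alpha}(j)$ behaving as in \eqref{Z_alpha_coeffs}–\eqref{rr_px}), and that no cancellation or multiplicity issue inflates the term count beyond $4\sum_{j=0}^{k}\binom{n}{j}+5$; granting \eqref{MP_final_it} and the bound \eqref{bound_on_poly}, which the excerpt defers to the appendices, the remaining steps are routine products of known unitaries and routine gate-counting.
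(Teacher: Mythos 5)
Your proposal is correct and follows essentially the same route as the paper's proof: substitute \eqref{DL3}, \eqref{basic_grad_exp} (with $C_{n}$ replaced via \eqref{Cnd}), and \eqref{MP_final_it}, distribute the $Z_{\alpha}$ terms of $M(p(x))$ against the unitaries comprising $D_{n}$, count $O(n^{k})$ terms via \eqref{bound_on_poly}, and inherit the per-term gate bounds from \thref{lm1DLCU} with at most $k$ additional $Z$ gates. Your write-up is in fact somewhat more explicit than the paper's (naming the four unitaries in $h_{1}^{2}D_{n}$ and giving the exact count $4\sum_{j=0}^{k}\binom{n}{j}+5$), but the argument is the same.
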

\begin{proof}
    It follows from \eqref{MP_final_it} and \eqref{bound_on_poly} that $M(p(x))$ can be written as the sum of $O(n^{k})$ unitary terms, each of which contains at most $k$-$Z$ gates. An LCU of the desired form can be obtained by first substituting $A_{n,1}$, $D_{n}$, and $M(p(x))$ for \eqref{DL3}, \eqref{basic_grad_exp}, and \eqref{MP_final_it}, respectively (note that $C_{n}$ must also be substituted using \eqref{Cnd}), and then distributing $M(p(x))$ terms against $D_{n}$ terms. In particular, one obtains an LCU with $O(n^{k})$ unitary terms. The claimed upper bound on the gate cost of implementing a single term from this decomposition follows from the estimates derived in \thref{lm1DLCU}, since the terms in this setting contain, in the worst case, an additional $k$-$Z$ gates. 
\end{proof}

We emphasize that \eqref{thm5_eq} is an exact representation of the discrete differential operator on the left hand side of \eqref{L_p_dirch_2}, and reductions in cost are likely possible with some suitable truncation that removes the least significant terms of $M(p(x))$. Moreover, \eqref{thm5_eq} can be generalized to treat problems of the form $u_{xx}+g(x)u_{x}$, where $g(x)$ is well-suited to polynomial approximation. 

\section{Discussion}\label{disc_sect}
Partial differential equations are among the most important objects in applied mathematics. Quantum computing may help meet the constant demand for increased accuracy and efficiency in the numerical methods used to solve them. In this work, we present novel LCU decompositions for a variety of discrete Poisson problems. The formulations considered allow for rectangular domains in arbitrary dimension $d$ and periodic, Dirichlet, Neumann, Robin, or mixed boundary conditions. Our decompositions are comprised of at most $O(d)$ unitary terms which can be individually implemented using at most a single $Z$ gate, $2$-Hadamard gates, and $O(n)$ Toffoli, $X$, and $CNOT$ gates if a single ancilla qubit is available (no ancilla are required in the $d>2$ setting). We emphasize that the size of the decompositions are independent of the number of grid points and scale linearly with the spatial dimension. We highlight the advantages of our method by giving cost estimates of our $1$-dimensional Dirichlet LCU within the VQLS algorithm. We demonstrate a quadratic improvement over other leading decompositions in terms of total Toffoli gates required per iteration. Our methods could be combined with a wide variety of VQAs or quantum linear solvers that utilize an LCU strategy and we believe comparable advantages can be realized in many cases. Furthermore, our decompositions could be incorporated as a subroutine for LCU based quantum PDE solvers whose underlying equation involves the Laplacian.   

Moreover, we provide a method to generate LCU decompositions for Poisson problems augmented with polynomial-coefficient first-order derivative terms. We achieve this by introducing a technique to encode polynomial data on $n$ qubit systems using at most $O(n^{k})$ Pauli terms, where $k<n/2$ is the degree of the polynomial under consideration. In the $1$-dimensional Dirichlet setting, we obtain an LCU with $O(n^{k})$ terms which can be implemented with at most $2$ Hadamard, $(k+1)$ $Z$, and $O(n)$ Toffoli, $X$, and $CNOT$ gates. We believe this polynomial encoding method could be more widely useful. Polynomial data loading is an active area of research and alternate strategies exist \cite{gonzalez2024efficient, marin2023quantum, welch2014efficient}. While these methods may be preferable for some applications, they make use of more sophisticated machinery and incur serious overhead in terms of rotation gates, $CNOT$ gates, and additional ancilla. Our strategy has the advantage of requiring only Pauli-$Z$ gates, no ancilla, and retains desirable poly-logarithmic scaling for any fixed polynomial degree.  

Several promising avenues for future research are touched upon throughout this article. The most straightforward extensions include deriving an LCU method for time-dependent problems such as the heat or wave equations, generalizing the $1$-dimensional polynomial-coefficient decomposition to treat higher-dimensional problems posed on box-type domains, and investigating the performance of our methods when combined with fault tolerant algorithms based on block-encoding in the spirit of \cite{childs2017quantum}. Another natural topic for further investigation is an in-depth case study of VQLS performance; this would require one to adopt strategies to encode boundary values and the forcing function, design an appropriate ansatz, and perform solution readout. Finally, we mention that our decompositions might be adaptable to penalized FDM schemes. This could be a step towards tackling the critical problem of designing quantum PDE solvers for problems posed on non-rectangular domains, and we plan to address it in future research.   
\newpage

\bibliographystyle{unsrt}
\bibliography{references}

\appendix

\section{LCU Decompositions for $\mathbb{T}$ Terms} \label{Apen_A}
Several useful results regarding tensor products of elements belonging to $\mathbb{T}$ are derived in this appendix. First, we produce a decomposition for $\tau_{0}^{\otimes n}$. Recall the definition of $C_{n}$ given in \eqref{Cpm}. It is clear that 
\begin{equation*}
    \tau_{0}^{\otimes n} = \begin{pmatrix}
        1 & & & \\
          &0& & \\ 
          & & \ddots& \\ 
          & &      & 0
    \end{pmatrix} = \frac{1}{2}\left(X^{\otimes n}\cdot C_{n}+X^{\otimes n}\cdot C_{n}\cdot \left( I^{\otimes n-1}\otimes Z\right)\right) \, . 
\end{equation*}
The alternate expression for $C_{n}$ given by equation \eqref{Cnd} can now be used to show 
\begin{equation} \label{tau_zero}
        \tau_{0}^{\otimes n} = \frac{1}{4}\left(X^{\otimes n}\cdot C^{-}_{n}+I^{\otimes n}+X^{\otimes n}\cdot C^{-}_{n}\cdot \left(I^{\otimes n-1}\otimes Z\right) + I^{\otimes n-1}\otimes Z\right) \, .
\end{equation}
One can derive expressions for $\tau_{j}^{\otimes n}$, where $j \in \{2,3,4\}$, using \eqref{tau_zero} and the relations
\begin{align}\label{tau_0_rels}
    \begin{split}
        X\cdot \tau_{0} & = \tau_{2} \\ 
        \tau_{0} \cdot X & = \tau_{1} \\ 
        X\cdot \tau_{0} \cdot X & = \tau_{3} \, . 
    \end{split}
\end{align}
We summarize the results below:
\begin{align} \label{tau_3}
\begin{split}
        \tau_{0}^{\otimes n} &= \frac{1}{4}\left(
    I^{\otimes n}+ I^{\otimes n-1}\otimes Z + X^{\otimes n} \cdot C_{n}^{-}+X^{\otimes n}\cdot C_{n}^{-}\cdot I^{\otimes n-1}\otimes Z
    \right) \\
        \tau_{1}^{\otimes n} &= \frac{1}{4}\left(C^{-}_{n}+X^{\otimes n}-C^{-}_{n}\cdot \left(I^{\otimes n-1}\otimes Z\right) + \left(I^{\otimes n-1}\otimes Z\right)\cdot X^{\otimes n}\right) \\
        \tau_{2}^{\otimes n} &= \frac{1}{4}\left(C^{-}_{n}+X^{\otimes n}+C^{-}_{n}\cdot \left(I^{\otimes n-1}\otimes Z\right) +X^{\otimes n}\cdot \left( I^{\otimes n-1}\otimes Z\right)\right) \\
        \tau_{3}^{\otimes n} &= \frac{1}{4}\left(X^{\otimes n}\cdot C^{-}_{n}+I^{\otimes n}-X^{\otimes n}\cdot C^{-}_{n}\cdot \left(I^{\otimes n-1}\otimes Z\right) - I^{\otimes n-1}\otimes Z\right) \, . 
\end{split}
\end{align}

Next, we record an efficient decomposition for tensor products of $\tau$ terms. If $\tau_{r_{j}} \in \mathbb{T}$, for $j \in \{1,\dots, n\}$ and $r_{j} \in \{0,1,2,3\}$, then equation \eqref{tau_0_rels} can be used to show
\begin{equation} \label{tau_4}
    \bigotimes_{j=1}^{n} \tau_{r_{j}} = \left(\bigotimes_{j=1}^{n}X^{\nu_{r_{j}}}\right)\cdot\left(\bigotimes_{j=1}^{n}\tau_{0}\right)\cdot \left(\bigotimes_{j=1}^{n}X^{\sigma_{r_{j}}}\right) \, , 
\end{equation}
where 
\begin{equation} \label{tau_5}
    (\nu_{r_{j}}, \sigma_{r_{j}}) \coloneqq \begin{cases}
        (0,0), \qquad \text{for} \qquad \tau_{r_{j}} = \tau_{0}\\ 
        (0,1), \qquad \text{for} \qquad \tau_{r_{j}} = \tau_{1}\\ 
        (1,0), \qquad \text{for} \qquad \tau_{r_{j}} = \tau_{2}\\ 
        (1,1), \qquad \text{for} \qquad \tau_{r_{j}} = \tau_{3} \, . 
    \end{cases}
\end{equation}
Finally, equations \eqref{tau_zero}, \eqref{tau_4}, and \eqref{tau_5} can be used in conjunction to yield 
\begin{equation} \label{tau_6}
    \bigotimes_{j=1}^{n} \tau_{r_{j}} = \left(\bigotimes_{j=1}^{n}X^{\nu_{r_{j}}}\right)\cdot\left(
    I^{\otimes n}+ I^{\otimes n-1}\otimes Z + X^{\otimes n} \cdot C_{n}^{-}+X^{\otimes n}\cdot C_{n}^{-}\cdot I^{\otimes n-1}\otimes Z
    \right)\cdot \left(\bigotimes_{j=1}^{n}X^{\sigma_{r_{j}}}\right) \, . 
\end{equation}
Note that \eqref{tau_6} provides a $4$ term LCU decomposition for a generic tensor product of elements from $\mathbb{T}$. Furthermore, it follows directly from the resource estimates of $C_{n}^{-}$ provided in \thref{lm1DLCU} that any term from \eqref{tau_6} can be implemented using at most $2$ Hadamard, a single $Z$, and $O(n)$ Toffoli, $X$, and $CNOT$ gates. 

\section{Example $2$-dimensional problem with mixed boundary conditions} \label{Append_B}
Let $\Omega = (0,1)\times(0,1)$. Consider the following mixed Dirichlet--Robin problem: 
\begin{empheq}[left=\empheqlbrace]{align*}
\label{poiss_ex}
\begin{split}
        \Delta u&=f \qquad \qquad  \;\;\text{in} \; \; \; \Omega\\  
         u(0,y) &=g_{0}(y)   \\
         u(1,y) & = g_{1}(y) \\ 
         u_{y}(x,0)+a_{0}u(x,0) &=b_{0}(x) \\ 
         u_{y}(x,1) +a_{1}u(x,1) & = b_{1}(x) \, . 
\end{split}
\end{empheq}
We demonstrated how to encode the interior and boundary differential operators in Section \ref{Decomp_sect}. In particular, if we take $\bold{n} = (n_{1},n_{2})$ and $\alpha = (1,3)$, then 
\begin{equation*}
    L^{2}_{\bold{n},\alpha} = I^{\otimes n_{2}}\otimes L_{n_{1},1}+L_{n_{2},3}\otimes I^{\otimes n_{1}}
\end{equation*}
approximates $\Delta$ inside $\Omega$ and $\partial_{y}+a_{i}$ on $y=i$, where $i \in \{0,1\}$. However, we did not give a complete description in Section \ref{Decomp_sect} for encoding the inhomogeneous boundary data; the relevant functions in the present setting are $g_{i}(y)$ and $b_{i}(x)$ for $i \in \{0,1\}$. As mentioned before, the essential task is to encode the discretized values of $g_{i}, b_{i}$ according to their corresponding grid positions, scale them appropriately, and absorb the resulting values into $f(x_{i}, y_{j})$. Recall that our numerical scheme for Dirichlet boundary conditions excludes the endpoints of an interval by incorporating the prescribed values into the equations for the first and last interior points, whereas our Robin/Neumann scheme includes endpoints. The labeling described in \eqref{Higher_D_Sub} gives 
\begin{equation*}
    \bold{u} = (u_{1,1}, \dots, u_{N_{1},1}, u_{1,2}, \dots , u_{N_{1},2}, \dots, u_{N_{1},N_{2}})^{T}\, , 
\end{equation*}
where $u_{i,j} = u(x_{i}, y_{j})$. It follows that $g_{0}(y_{j})$ is associated with the $\left((j-1)N_{1}+1\right)\textsuperscript{th}$ row, for $j \in \{1, \dots, N_{2}\}$, of $L^{2}_{\bold{n},\bold{\alpha}}$. Likewise, $g_{1}(y_{j})$ corresponds to the $jN_{1}\textsuperscript{th}$ row of $L^{2}_{\bold{n},\bold{\alpha}}$. On the other hand, the values of $b_{0}(x_{j})$ and $b_{1}(x_{j})$ are associated with the first $N_{1}$ and last $N_{2}$ rows of $L^{2}_{\bold{n},\bold{\alpha}}$, respectively. We can now state the full discretized Poisson problem:
\begin{equation}\label{ex_eq}
    L^{2}_{\bold{n},\alpha}\bold{u} = \bold{f}+B^{2}_{\bold{n},\alpha} \, , 
\end{equation}
where $\bold{f} = (f(x_{1},y_{1}), \dots, f(x_{N_{1}},y_{1}),\dots ,f(x_{N_{1}},y_{N_{2}})^{T}$, and 
\begin{align}\label{ex_force}
\begin{split}
     B^{2}_{\bold{n},\alpha} =& -\frac{1}{h_{1}^{2}}(g_{0}(y_{1}), \dots, g_{0}(y_{N_{2}}))^{T}\otimes \ket{0}^{\otimes n_{1}}-\frac{1}{h_{1}^{2}}(g_{1}(y_{1}), \dots, g_{1}(y_{N_{2}}))^{T}\otimes \ket{1}^{\otimes n_{1}} \\ 
     &+\frac{2}{h_{3}}\ket{0}^{\otimes n_{2}}\otimes (b_{0}(x_{1}), \dots, b_{0}(x_{N_{1}}))^{T}-\frac{2}{h_{3}}\ket{1}^{\otimes n_{2}}\otimes(b_{1}(x_{1}), \dots, b_{1}(x_{N_{1}}))^{T} \, . 
\end{split}
\end{align}
We present equations \eqref{ex_eq} and \eqref{ex_force} as an explicit demonstration of how inhomogeneous boundary data is incorporated into the types of discrete problem considered in this work.  

\section{An LCU Decomposition Strategy for Polynomial Data}
\label{Append_formula_poly_dec}
Recall the set-up of Section \ref{ext_sect} in which $M(p(x)) = \textup{diag}(p(x_{1}),\dots, p(x_{2^{n}}))$ is a matrix that encodes the discrete values of $p(x) = a_{0}+a_{1}x+\cdots+a_{k}x^{k}$ on $(0,1)$. In particular, $x_{i} = ih_{1} = \frac{i}{2^{n}+1}$ for $i \in \{1, \dots, 2^{n}\}$. This Appendix is dedicated to deriving an LCU decomposition for $M(p(x))$. To this end, a straightforward calculation shows that 
\begin{equation} \label{basic_mx}
    M(x) = \frac{1}{2}\left(I^{\otimes n} - \frac{2^{n}}{2^{n}+1}\sum_{j=1}^{n}2^{-j}Z_{j}\right) \, ,
\end{equation}
where $Z_{j}$ is defined by $Z_{j} = I^{\otimes j-1}\otimes Z \otimes I^{\otimes n-j}$, and consequently 
\begin{equation}\label{basic_px}
    M(p(x)) = a_{0}I^{\otimes n}+a_{1}M(x)+\cdots + a_{k}M^{k}(x) \, . 
\end{equation}
Equation \eqref{basic_mx} provides us with a simple decomposition for $M(x)$ that can be combined with \eqref{basic_px} to construct efficient decompositions for a class of polynomial data. However, this strategy requires us to compute $M^{r}(x)$ for $r \in \{2,\dots, k\}$. 

For convenience, let us rescale \eqref{basic_mx} as 
\begin{equation}\label{basic_px_2}
    J \coloneq 2M(x) = I^{\otimes n} + \sum_{j=1}^{n}c_{j}Z_{j}\, \qquad c_{j} = -\frac{2^{n}}{2^{n}+1}2^{-j}\, . 
\end{equation}
Suppose that one has written the $k$-th power of $J$ in the following form:
\begin{align} \label{KP_exp_1}
\begin{split}
    J^{k} &= c_{0}(k)I^{\otimes n}+\sum_{i=1}^{n}c_{i}(k)c_{i}Z_{i}+\sum_{i_{1}=1}^{n-1}\sum_{i_{2}=i_{1}+1}^{n}c_{i_{1}i_{2}}(k)c_{i_{1}}c_{i_{2}}Z_{i_{1}}Z_{i_{2}}+\cdots  \\ &+ \sum_{i_{1}=1}^{n-k}\sum_{i_{2}=i_{1}+1}^{n-k+1}\cdots \sum_{i_{k}=i_{k-1}+1}^{n}c_{i_{1}\cdots i_{k}}(k)c_{i_{1}}\cdots c_{i_{k}}Z_{i_{1}}\cdots Z_{i_{k}} \, . 
\end{split}
\end{align}
Note that a representation for $J^{k}$ in the form of \eqref{KP_exp_1} always exists; we have simply collected all products of $k$ or fewer distinct $Z_{i}$ terms. Let $\mathcal{A}_{k} = \{ \{i_{1}, \dots, i_{k}\} \; \; | \; \; 1\leq i_{1}<i_{2} < \cdots < i_{k}\leq n\}$. For $\alpha = \{i_{1}, \dots, i_{m}\} \in \mathcal{A}_{m}$, we introduce the definitions  
\begin{align}\label{Z_alpha_coeffs}
    \begin{split}
        c_{\alpha}(k) &\coloneq c_{i_{1},\dots, i_{m}}(k) \\
        c_{\alpha} &\coloneq c_{i_{1}}\cdots c_{i_{m}} \\
        Z_{\alpha} &\coloneq Z_{i_{1}}\cdots Z_{i_{m}} \, . 
    \end{split}
\end{align}
This multi-index notation can be used to rewrite equation \eqref{KP_exp_1} more compactly as 
\begin{equation}\label{KP_exp_2}
    J^{k} = c_{0}(k)I^{\otimes n} + \sum_{i=1}^{n}c_{i}(k)c_{i}Z_{i}+\sum_{\alpha \in \mathcal{A}_{2}}c_{\alpha}(k)c_{\alpha}Z_{\alpha} + \cdots + \sum_{\alpha \in \mathcal{A}_{k}}c_{\alpha}(k)c_{\alpha}Z_{\alpha} \, . 
\end{equation}
Equation \eqref{basic_px} can now be rewritten with the aid of \eqref{basic_px_2} and \eqref{KP_exp_2} as 
\begin{align} \label{MP_final}
    \begin{split}
        M(p(x)) &= \left(a_{0}+\sum_{j=1}^{k}\frac{a_{j}c_{0}(j)}{2^{j}}\right) I^{\otimes n} + \sum_{i=1}^{n}\left(\sum_{j=1}^{k}\frac{a_{j}c_{i}(j)}{2^{j}}\right)c_{i}Z_{i} + \cdots \\ 
        &+ \sum_{\alpha \in \mathcal{A}_{r}}\left( \sum_{j=r}^{k}\frac{a_{j}c_{\alpha}(j)}{2^{j}}\right)c_{\alpha}Z_{\alpha} + \cdots + \sum_{\alpha \in \mathcal{A}_{k}}\frac{a_{k}c_{\alpha}(k)}{2^{k}}c_{\alpha}Z_{\alpha} \, . 
    \end{split}
\end{align}
If $\alpha \in \mathcal{A}_{m}$, then the coefficients for $J^{k+1}$ can be described using the recursive relations
\begin{align}\label{rr_px}
    \begin{split}
        c_{\alpha}(p+1) &= 0 \qquad \qquad \qquad \qquad \qquad \qquad \qquad \qquad \qquad \;\text{for} \qquad m>p+1 \\ 
        c_{\alpha}(p+1) &= (p+1)! \qquad \qquad\qquad\qquad\qquad\qquad\qquad \;\;\;\;\;\text{for} \qquad m=p+1 \\
        c_{\alpha}(p+1) &= c_{\alpha}(p) + \sum_{l \in \alpha}c_{\alpha\setminus \{l\}}(p)+ \sum_{l \in \alpha^{c}}c_{l}^{2}c_{\alpha \cup \{l\}}(p) \qquad \;\;\text{for} \qquad m \in \{0, \cdots, p\} \, . 
    \end{split}
\end{align}
Equation \eqref{rr_px} is verified with an inductive argument by multiplying the right hand side of \eqref{basic_px_2} against \eqref{KP_exp_2} and collecting all values that contribute to the coefficient of $Z_{\alpha}$. In practice, one can use \eqref{MP_final} and \eqref{rr_px} to produce explicit LCU decompositions of $M(p(x))$. 

\section{Derivation of LCU Coefficients for Low-Degree Polynomials} \label{Appendix_C}
In this Appendix, we demonstrate an application of the methods described in Appendix \ref{Append_formula_poly_dec} for the case of encoding cubic polynomial data. In particular, we derive an LCU decomposition for $M(p(x))$ using \eqref{MP_final} given $p(x) = a_{0}+a_{1}x+a_{2}x^{2}+a_{3}x^{3}$.  

The values of $c_{\alpha}(k)$ in $J^{k}$ are recorded for $|\alpha|\, , k \leq 3$, since they are required in \eqref{MP_final}. Recall that $J = I^{\otimes n}+\sum_{i=1}^{n}c_{i}Z_{i}$, where $c_{i} = -\frac{2^{n}}{2^{n}+1}2^{-i}$ and $c_{\alpha}(k)$ is a coefficient for $c_{\alpha}Z_{\alpha}$ in the expression for $J^{k}$ (see \eqref{KP_exp_2}). It is clear from \eqref{basic_px_2} that 
\begin{align}\label{exp_c1}
    \begin{split}
        c_{0}(1) &= 1 \\
        c_{i}(1) &= 1 \qquad\qquad\qquad 1\leq i \leq n \, .
    \end{split}
\end{align}
Now, \eqref{rr_px} can be used with \eqref{exp_c1} to find all nonzero $c_{\alpha}(2)$ terms:
\begin{align}\label{exp_c2}
    \begin{split}
        c_{0}(2) &= 1 +\frac{2^{n}-1}{3(2^{n}+1)}\\
        c_{i}(2) &= 2 \qquad \qquad \qquad 1\leq i \leq n \\
        c_{ij}(2) &= 2 \qquad \qquad \qquad 1\leq i < j\leq n \, 
    \end{split}
\end{align}
and again with \eqref{exp_c2} to find: 
\begin{align}\label{exp_c3}
    \begin{split}
        c_{0}(3) &= 1 +\frac{2^{n}-1}{2^{n}+1}\\
        c_{i}(3) &= \frac{2(2^{n+1}+1)}{2^{n}+1}-\frac{2^{2n-2i+1}}{(2^{n}+1)^{2}} \qquad 1\leq i \leq n \\
        c_{ij}(3) &= 6 \qquad\qquad\qquad \qquad\qquad \qquad 1\leq i < j\leq n \\
        c_{ijk}(3) & = 6 \qquad\qquad\qquad \qquad\qquad \qquad 1\leq i < j<k \leq n \, . 
    \end{split}
\end{align}
The recursive relation \eqref{rr_px} could be applied again to determine all values $c_{\alpha}(4)$ and so on. Finally, we use \eqref{MP_final}, \eqref{exp_c1}, \eqref{exp_c2}, and \eqref{exp_c3} to conclude
\begin{equation} \label{final_cub_poly}
    M(p(x)) = A_{0}I^{\otimes n}+ \sum_{i=1}^{n}A_{i}Z_{i}+\sum_{i=1}^{n-1}\sum_{j=i+1}^{n}A_{ij}Z_{ij}+\sum_{i=1}^{n-2}\sum_{j=i+1}^{n-1}\sum_{l=j+1}^{n}A_{ijl}Z_{ijl}\, , 
\end{equation}
where the coefficients of \eqref{final_cub_poly} are defined by 
\begin{align*}
    \begin{split}
        A_{0} & = a_{0} + \frac{a_{1}}{2}+\frac{a_{2}}{4}\left(1 +\frac{2^{n}-1}{3(2^{n}+1)}\right)+\frac{a_{3}}{8}\left(1 +\frac{2^{n}-1}{2^{n}+1}\right) \\ 
        A_{i} & = \left(\frac{a_{1}}{2}+\frac{a_{2}}{2}+\frac{a_{3}}{8}\left(\frac{2(2^{n+1}+1)}{2^{n}+1}-\frac{2^{2n-2i+1}}{(2^{n}+1)^{2}}\right)\right)c_{i} \\ 
        A_{ij} & = \left(\frac{a_{2}}{2}+\frac{3a_{3}}{4}\right)c_{i}c_{j} \\ 
        A_{ijl} & = \frac{3a_{3}}{4}c_{i}c_{j}c_{l} \, . 
    \end{split}
\end{align*}

\end{document}